\setlist{nosep}
\def\figdir{Pictures/}
\newcommand{\newautotheorem}[3] 
{
\newaliascnt{#1}{#2}
\newtheorem{#1}[#1]{#3}
\aliascntresetthe{#1}
\expandafter\def\csname #1autorefname\endcsname{%
#3%
}%
}
\newtheorem{theorem}{Theorem}
\theoremstyle{definition}
\renewcommand{\Vec}{\operatorname{vec}}
\newcommand{\Kirchhoff}{\operatorname{Kf}}
\newcommand{\Rog}{\operatorname{R}_{\mathrm{g}}^2}
\newcommand{\R}{\mathbb{R}}
\newcommand{\Hom}{\operatorname{Hom}}
\newcommand{\bdry}{\partial}
\newcommand{\bdy}{\bdry}
\newcommand{\diag}{\operatorname{diag}}
\newcommand{\Loops}{\operatorname{Loops}}
\newcommand{\Mat}{\operatorname{Mat}}
\newcommand{\from}{\co\!\!}
\newcommand{\tr}{\operatorname{tr}}
\newcommand{\graphG}{\mathbf{G}}
\newcommand{\edgesE}{\mathbf{e}}
\newcommand{\verticesV}{\mathbf{v}}
\newcommand{\edge}{e}
\newcommand{\vertex}{v}
\newcommand{\head}{\operatorname{head}}
\newcommand{\tail}{\operatorname{tail}}
\newcommand{\im}{\operatorname{im}}
\newcommand{\Span}{\operatorname{span}}
\newcommand{\dist}{\sim}
\newcommand{\ceq}{\coloneqq}
\def\co{\colon\thinspace}
\let\mgp=\marginpar \marginparwidth18mm \marginparsep1mm
\def\marginpar#1{\mgp{\raggedright\tiny #1}}
\let\lbl=\label
\def\label#1{\lbl{#1}\ifinner\else\marginpar{\ref{#1} #1}\ignorespaces\fi}
\DeclarePairedDelimiterXPP{\pars}[1]{\mathop{}}{\lparen}{\rparen}{}{#1}
\DeclarePairedDelimiterXPP{\abs}[1]{\mathop{}}{\lvert}{\rvert}{}{#1}
\DeclarePairedDelimiterXPP{\norm}[1]{\mathop{}}{\lVert}{\rVert}{}{#1}
\DeclarePairedDelimiterXPP{\seminorm}[1]{\mathop{}}{\lbrack}{\rbrack}{}{#1}
\DeclarePairedDelimiterXPP{\inner}[1]{\mathop{}}{\langle}{\rangle}{}{#1}
\DeclarePairedDelimiterXPP{\brackets}[1]{\mathop{}}{\lbrack}{\rbrack}{}{#1}
\DeclarePairedDelimiterXPP{\braces}[1]{\mathop{}}{\lbrace}{\rbrace}{}{#1}
\DeclarePairedDelimiterXPP{\intervalcc}[1]{\mathop{}}{\lbrace}{\rbrace}{}{#1}
\DeclarePairedDelimiterXPP{\intervalco}[1]{\mathop{}}{\lbrace}{\rparen}{}{#1}
\DeclarePairedDelimiterXPP{\intervaloc}[1]{\mathop{}}{\lparen}{\rbrace}{}{#1}
\DeclarePairedDelimiterXPP{\intervaloo}[1]{\mathop{}}{\lparen}{\rparen}{}{#1}
\DeclarePairedDelimiterXPP{\set}[2]{\mathop{}}{\lbrace}{\rbrace}{}{#1\,\delimsize\vert\,\mathopen{}#2}
\let\dot\undefined
\DeclarePairedDelimiterXPP{\dot}[2]{\mathop{}}{\langle}{\rangle}{}{#1,#2}
\DeclarePairedDelimiterXPP{\floor}[1]{\mathop{}}{\lfloor}{\rfloor}{}{#1}
\DeclarePairedDelimiterXPP{\ceil}[1]{\mathop{}}{\lceil}{\rceil}{}{#1}
\DeclareDocumentCommand{\converges}{ o }{
	\mathbin{%
		\IfValueTF{#1}{%
			\mathrel{\vbox{\offinterlineskip\ialign{%
				\hfil##\hfil\cr
				$\scriptscriptstyle#1$\cr
				$-\!\!\!-\!\!\!\rightarrow$\cr
			}}}
		}{%
			-\!\!\!-\!\!\!\rightarrow
		}%
	}%
}
\begin{document}
\title{An exact formula for the contraction factor of \\
a subdivided Gaussian topological polymer}
\author{Jason Cantarella}
\altaffiliation{Mathematics Department, University of Georgia, Athens, GA, USA}
\noaffiliation
\author{Tetsuo Deguchi}
\altaffiliation{Department of Physics, Ochanomizu University, Bunkyo-ku, Tokyo 112--8610, Japan}
\noaffiliation
\author{Clayton Shonkwiler}
\altaffiliation{Department of Mathematics, Colorado State University, Fort Collins, CO, USA}
\noaffiliation
\author{Erica Uehara}
\altaffiliation{Physical Statistics Laboratory, Kyoto University, Sakyo-ku, Kyoto 606--8501, Japan}
\noaffiliation

\keywords{radius of gyration, graph embedding, subdivision graph}

\begin{abstract}
We consider the radius of gyration of a Gaussian topological polymer $\graphG$ formed by subdividing a graph $\graphG'$ of arbitrary topology (for instance, branched or multicyclic). We give a new exact formula for the expected radius of gyration and contraction factor of $\graphG$ in terms of the number of subdivisions  of each edge of $\graphG'$ and a new weighted Kirchhoff index for $\graphG'$. The formula explains and extends previous results for the contraction factor and Kirchhoff index of subdivided graphs. 
\end{abstract}
\date{\today}
\maketitle

Polymers with new topological structures have recently attracted a great deal of attention, as new syntheses of graph-shaped and multicyclic structures have been accomplished~\cite{Tezuka:2017gh}. We consider the statistical properties of shapes of these polymers using Gaussian ideal chains, as in  ~\cite{James1943,James:1947hp,FloryPaulJ1969Smoc}. 

The ensemble average $\left< \Rog \right>$ of the radius of gyration of the embedding is a standard measure of the size of the molecule. In this context, $\left< \Rog \right>$ is well known to be given in terms of the Kirchhoff index~\cite{Klein:1993tb,Klein:2002vx,bonchev_molecular_1994,Gutman:1996hq,klein_random_2004,Bapat:cx,Zhang:2006bm,babic_resistance-distance_2002} or the eigenvalues of the graph Laplacian~\cite[eq.\ 18a]{Eichinger1980}:
\begin{equation}
\left<\Rog(\graphG)\right> = \frac{d}{\verticesV^2} \Kirchhoff(\graphG) = \frac{d}{\verticesV} \tr L^+ = \frac{d}{\verticesV} \sum_{i=1}^{\verticesV - 1} \frac{1}{\lambda_i} \label{eq:kirchhoff}
\end{equation}
where $\graphG$ is a graph with $\verticesV$ vertices and $\edgesE$ edges that is embedded in $\R^d$ and the $\lambda_i$ are the nonzero eigenvalues of the graph Laplacian $L = L(\graphG)$. Especially when scaling, it's more useful to consider the contraction factor (cf.~\cite[1.48]{stepto:2015be}) of the network:
\[
g(\graphG) \ceq \frac{\left<\Rog(\graphG)\right>}{\left<\Rog(\text{path graph with $\verticesV$ vertices})\right>}.
\]
In previous work~\cite{Cantarella2022ROG}, we considered the asymptotic contraction factor for a subdivided graph whose edges obey Gaussian potentials, showing 
\begin{theorem}[\cite{Cantarella2022ROG}, Theorem 5]
\label{thm:asymptotic contraction factor}
For any connected multigraph $\graphG$ (including loop and multiple edges), if $\graphG_n$ is the $n$-part edge subdivision of $\graphG$, then the contraction factor $g(\graphG_n)$ of a Gaussian topological polymer with underlying graph $\graphG_n$ embedded in $\R^d$ obeys
\begin{equation*}
g(\graphG_\infty) \ceq \lim_{n \rightarrow \infty} g(\graphG_n) = \frac{3}{\edgesE(\graphG)^2} \left( \operatorname{Tr} \mathcal{L}^+(\graphG) + \frac{1}{3} \operatorname{Loops}(\graphG) - \frac{1}{6} \right),
\end{equation*}
where $\operatorname{Loops}(\graphG) = \edgesE(G) - \verticesV(G) + 1$ is the cycle rank of $\graphG$ (which is equal to the cycle rank of $\graphG_n$).
\end{theorem}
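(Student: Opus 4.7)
The plan is to apply equation~\eqref{eq:kirchhoff} directly: compute $\Kirchhoff(\graphG_n)$ exactly in terms of $n$ and the structure of $\graphG$, divide by the Kirchhoff index of the path graph on $\verticesV(\graphG_n)$ vertices, and let $n\to\infty$. Since $\verticesV(\graphG_n) = \verticesV(\graphG) + (n-1)\edgesE(\graphG)$ and the path on $N$ vertices satisfies $\Kirchhoff(P_N) = \binom{N+1}{3}$, the denominator of $g(\graphG_n)$ grows like $n^3\edgesE(\graphG)^3/6$ to leading order, so the task reduces to computing $\Kirchhoff(\graphG_n)$ through order $n^3$.

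I would write $\Kirchhoff(\graphG_n) = \sum_{u<v} r_{uv}^{(n)}$ and partition pairs $(u,v)$ into four types according to whether each vertex is an original vertex of $\graphG$ or an interior subdivision vertex on some edge $e \in \edgesE(\graphG)$. The key electrical fact, which follows from a Schur complement at the set of original vertices, is that $\graphG_n$ restricted to the original vertex set is equivalent to $\graphG$ with every edge carrying resistance $n$. This immediately handles the original--original case, which contributes $n\,\Kirchhoff(\graphG)$. For the three remaining cases I would apply Kirchhoff's laws to the chord of length $n$ realizing a subdivided edge of $\graphG$, placed in parallel with the effective resistance of the remaining network between that edge's two endpoints. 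Each case then reduces to a closed-form sum in subdivision indices.

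The single-edge interior--interior case produces the $\binom{n+1}{3}$ series per edge (which matches the path-graph denominator), \emph{except} for edges that are loops in $\graphG$ (where $\head(e)=\tail(e)$): in that case the rest of the network shorts the two endpoints together before the chord is attached, and the resulting series differs from the generic one. This asymmetry is precisely the source of the $\tfrac13\operatorname{Loops}(\graphG)$ correction. The cross-edge interior--interior case, after being summed in closed form in position along each edge, contributes pairwise resistance distances between the edge endpoints in $\graphG$; these aggregate, together with the original--original and original--interior contributions, into $\tr\mathcal{L}^+(\graphG)$ through the identity $\Kirchhoff(\graphG)=\verticesV(\graphG)\tr\mathcal{L}^+(\graphG)$.

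The main obstacle will be the bookkeeping across the three non-trivial cases, and in particular confirming the precise constants $\tfrac13\operatorname{Loops}(\graphG)$ and $-\tfrac16$. I would sanity-check these against small examples --- the path (forcing $g(\graphG_\infty)=1$), a single cycle $C_m$ (forcing $g(\graphG_\infty)\to\tfrac14$ as $m\to\infty$), and a theta graph --- before committing to the general reduction. Collecting the leading $n^3$ behavior of $\Kirchhoff(\graphG_n)$ and dividing by the denominator's $n^3\edgesE(\graphG)^3/6$ then cancels one factor of $\edgesE(\graphG)$ and yields the $3/\edgesE(\graphG)^2$ prefactor of the stated formula.
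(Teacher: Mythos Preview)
Your approach is viable but genuinely different from the paper's. The paper does not compute $\Kirchhoff(\graphG_n)$ by resistance-distance bookkeeping at all. Instead it first proves the \emph{exact} finite-$n$ formula (\autoref{thm:main}) by (i) a symmetrization identity that averages $\Rog$ over all permutations of the subdivision edges within each edge group, (ii) the observation that the junction-vertex marginal is again a Gaussian topological polymer for $\graphG'$ with edge variance $n\sigma^2$ (\autoref{prop:embeddings of structure graph}), and (iii) the link between weighted $\Rog$ and $\tr\mathcal{L}_\Omega^+$ (\autoref{prop:weighted rog and weighted laplacian}). \autoref{thm:asymptotic contraction factor} then drops out in one line by letting $n\to\infty$: the weights $\deg v_i'+\tfrac{2}{n-1}\to\deg v_i'$ so $\mathcal{L}_{\Omega'}\to\mathcal{L}$, and $\verticesV\sim n\edgesE'$. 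Your route---partitioning vertex pairs by type and applying Schur complement / series--parallel reductions---is closer in spirit to the Kirchhoff-index papers the authors cite (Gao--Luo--Liu, Yang, Yang--Klein). It trades the symmetrization machinery for electrical bookkeeping; the paper's method gives the exact $g(\graphG_n)$ for free, yours is arguably more self-contained for the asymptotic statement alone.

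Two concrete corrections. First, the identity you invoke, ``$\Kirchhoff(\graphG)=\verticesV(\graphG)\tr\mathcal{L}^+(\graphG)$'', is false: that holds for the combinatorial Laplacian $L$, not the normalized $\mathcal{L}$. The bridge you actually need is the degree-Kirchhoff identity $\sum_{i<j}d_id_j\,r_{ij}=2\edgesE\,\tr\mathcal{L}^+$; this is exactly why $\tr\mathcal{L}^+$ (rather than $\tr L^+$) emerges---summing over interior positions along each subdivided edge effectively weights the endpoints by their degrees. Without this correction your aggregation step will not close. Second, your cycle sanity check is wrong: the $n$-subdivision of $C_m$ is $C_{mn}$, and every cycle has contraction factor exactly $\tfrac12$, not $\to\tfrac14$ (you can verify this directly in the target formula, since $\tr\mathcal{L}^+(C_m)=(m^2-1)/6$ and $\Loops=1$).
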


Here, $\mathcal{L}$ is the \emph{normalized} or~\emph{degree-weighted} graph Laplacian. 
\[
	\mathcal{L}_{ij} = \begin{cases}
					1 - \frac{2\times\text{\# loop edges}}{\deg(\vertex_i)}, & \text{ if } i = j, \\
					-\frac{k}{\sqrt{\deg(\vertex_i) \deg(\vertex_j)}}, &\text{ if } \vertex_i, \vertex_j \text{ joined by $k$ edges},\\
					0, & \text{ otherwise.}
				\end{cases}
\]

In this paper, we use a new theorem about radii of gyration to give an exact version of the formula for every $n$. The asymptotic result then follows as an easy corollary. Our main result is:

\begin{theorem}
\label{thm:main}
Suppose that embeddings $X$ of $\graphG$ are distributed as a Gaussian topological polymer with edge variance $\sigma^2$, and that $\graphG$ is a subdivision of $\graphG'$, where each edge in $\graphG'$ has been subdivided into $n$ pieces. Let $\Omega' = (\deg \vertex'_1 + \nicefrac{2}{n-1}, \dotsc, \deg \vertex'_{\verticesV'} + \nicefrac{2}{n-1})$, $\mathcal{L}_{\Omega'}(\graphG')$ be the weighted graph Laplacian of $\graphG'$ defined in \autoref{prop:weighted rog and weighted laplacian} below, and $\operatorname{Loops}(\graphG') = \edgesE' - \verticesV' + 1$ be the cycle rank of $\graphG'$ (which is equal to the cycle rank of $\graphG$).

Then the expected (squared) radius of gyration
\[
\left< \Rog(X) \right> = d \sigma^2 \frac{n^2-1}{2 \verticesV} \left( \frac{n}{n+1} \tr \mathcal{L}^+_{\Omega'}(\graphG') +  \frac{1}{3}\left(1 - \frac{1}{2\verticesV}\right) \operatorname{Loops}(\graphG') - \frac{1}{6}\left( 1 - \frac{1}{\verticesV} \right)\right). 
\]
The contraction factor of $\graphG$ is the quotient of this quantity by the corresponding expected (squared) radius of gyration for a linear polymer with the same number of monomers and the same edge variance $\sigma^2$. This is given by
\[
g(\graphG) = \frac{3(n^2-1)}{\verticesV^2-1} \left( \frac{n}{n+1} \tr \mathcal{L}^+_{\Omega'}(\graphG') +  \frac{1}{3}\left(1 - \frac{1}{2\verticesV}\right) \operatorname{Loops}(\graphG') - \frac{1}{6}\left( 1 - \frac{1}{\verticesV} \right)\right). 
\]
\end{theorem}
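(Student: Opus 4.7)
The plan is to start from the standard identity from \eqref{eq:kirchhoff} (modified to include the edge variance $\sigma^2$), namely $\langle \Rog(X)\rangle = \frac{d\sigma^2}{\verticesV}\tr L^+(\graphG)$, and to reduce the computation of $\tr L^+(\graphG)$ for the subdivided graph to a computation involving only data on the much smaller graph $\graphG'$. The guiding idea is that the $\edgesE'$ subdivided edges of $\graphG$ have a very rigid structure: each is a path of $n$ edges joining two vertices of $\graphG'$, and the $n-1$ internal vertices are degree-$2$ so they can be integrated out. The assumed \autoref{prop:weighted rog and weighted laplacian} is precisely the tool that packages this Schur-complement/marginalization step into an equality between $\tr L^+(\graphG)$ and a trace of the pseudoinverse of a \emph{weighted} Laplacian $\mathcal{L}_{\Omega'}(\graphG')$ on the original graph, plus explicit correction terms that record the statistics of the integrated-out path segments.

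Concretely, I would proceed in three steps. First, I would apply \autoref{prop:weighted rog and weighted laplacian} with the vertex weights $\Omega'_i = \deg \vertex'_i + \tfrac{2}{n-1}$. The weight $\deg \vertex'_i$ encodes that the vertex $\vertex'_i$ participates in $\deg \vertex'_i$ independent subdivided segments, while the $\tfrac{2}{n-1}$ correction is exactly the contribution coming from averaging over the $n-1$ internal positions of each such segment (essentially a discrete analog of $\int_0^1 t(1-t)\,dt$ type terms that appear when marginalizing a Gaussian bridge). This step produces the leading $\frac{n}{n+1}\tr \mathcal{L}^+_{\Omega'}(\graphG')$ piece of the formula. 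Second, I would compute the explicit contributions of the internal subdivision vertices: each subdivided edge is an independent Gaussian chain relative to its endpoints, so on a tree the total internal contribution is a sum of $\edgesE'$ identical terms computable directly (e.g.\ using the Kirchhoff index of a path of length $n$). When $\graphG'$ has cycles, the constraints among the endpoints correct this by exactly a $\operatorname{Loops}(\graphG')$-proportional term, consistent with the asymptotic result in \autoref{thm:asymptotic contraction factor}. Third, I would collect these pieces, converting $\verticesV = \verticesV' + (n-1)\edgesE'$ where convenient, and match them to the stated prefactors $\frac{n^2-1}{2\verticesV}$, $\frac{1}{3}(1 - \frac{1}{2\verticesV})$, and $\frac{1}{6}(1 - \frac{1}{\verticesV})$.

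For the contraction factor, I would use the well-known fact that the path graph on $\verticesV$ vertices has Kirchhoff index $\frac{\verticesV(\verticesV^2-1)}{6}$, so that $\langle \Rog(\text{path}_\verticesV)\rangle = \frac{d\sigma^2(\verticesV^2-1)}{6\verticesV}$. Dividing the first displayed formula by this quantity yields the prefactor $\frac{3(n^2-1)}{\verticesV^2-1}$ and reproduces the claimed expression for $g(\graphG)$. This last step is pure algebra.

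The main obstacle will be step two: correctly bookkeeping the contributions of the degree-$2$ internal vertices together with the Schur-complement reduction on the endpoints, and verifying that the three numerical corrections (the $\frac{n}{n+1}$ attenuation on the trace, the $\operatorname{Loops}(\graphG')$ coefficient with its $\frac{1}{3}(1-\frac{1}{2\verticesV})$ factor, and the additive $-\frac{1}{6}(1-\frac{1}{\verticesV})$) come out exactly as written rather than differing by $O(1/\verticesV)$ or $O(1/n)$ corrections. In particular, the appearance of $\operatorname{Loops}(\graphG')$ is not transparent from the local picture of integrating out each edge separately; it should emerge from the fact that the kernel of $\mathcal{L}_{\Omega'}(\graphG')$ differs in dimension from the kernel of $L(\graphG)$ by exactly what the cycle structure prescribes, and controlling this carefully is where the real work lies. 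Once this is done, the asymptotic formula in \autoref{thm:asymptotic contraction factor} is recovered by letting $n \to \infty$ with $\verticesV \to \infty$, giving a useful consistency check.
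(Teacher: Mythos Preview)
Your proposal sketches a Schur-complement/marginalization route, but this is \emph{not} how the paper proceeds, and your reading of \autoref{prop:weighted rog and weighted laplacian} is off. That proposition does not ``package the Schur-complement/marginalization step'' relating $\graphG$ to $\graphG'$; it is a statement about a single graph, saying that for a Gaussian topological polymer on $\graphG'$ the \emph{weighted} radius of gyration equals $\frac{d\sigma^2}{\omega}\tr\mathcal{L}^+_{\Omega'}(\graphG')$. It gives no bridge from $\tr L^+(\graphG)$ to anything on $\graphG'$ by itself.

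The paper's reduction from $\graphG$ to $\graphG'$ comes from two entirely different ingredients you did not anticipate. First, a symmetrization argument: for the group $\mathcal{S}$ of permutations that shuffle the $n$ edges within each subdivided arc, the Gaussian topological polymer law on $\graphG$ is $\mathcal{S}$-invariant (\autoref{lem:invariance}), so $\langle\Rog(X)\rangle$ equals the expectation of the orbit average $\frac{1}{\#\mathcal{S}}\sum_{S\in\mathcal{S}}\Rog(X^S)$. Second, an external theorem (\autoref{thm:symmetrization formula}, proved in a companion paper) evaluates that orbit average \emph{deterministically} for any fixed $X$ as $\Rog(X',\deg+\tfrac{2}{n-1})$ plus explicit terms in $\|W\|^2$ and $\|W'\|^2$. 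Only then does \autoref{prop:weighted rog and weighted laplacian} enter, together with \autoref{prop:embeddings of structure graph} (which shows the junction-vertex configuration $X'$, after recentering, is itself a Gaussian topological polymer for $\graphG'$ with edge variance $n\sigma^2$), to compute $\langle\Rog(X',\Omega')\rangle$. The $\operatorname{Loops}(\graphG')$ term arises not from any kernel-dimension comparison but simply from $\langle\|W\|^2\rangle=d\sigma^2(\edgesE-\operatorname{Loops})$ and the analogous formula for $W'$.

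Your Schur-complement idea is a plausible alternative in principle---indeed the Kirchhoff-index literature cited in the paper (Gao--Luo--Liu, Yang, Yang--Klein) computes $\operatorname{Kf}(\graphG)$ by related direct methods---but your sketch supplies none of the actual bookkeeping, and the paragraph where you say ``the main obstacle will be step two'' is precisely the entire proof. As written this is a plan, not an argument, and it rests on a mischaracterization of the one proposition you invoke.
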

This implies the asymptotic result of~\autoref{thm:asymptotic contraction factor}, since $\verticesV = (n-1) \edgesE' + \verticesV' \rightarrow n \edgesE'$ as $n \rightarrow \infty$. 

These kinds of results have a long history in polymer science, though not precisely in the form above: Kurata and Fukatsu~\cite{kuratafukatsu1964} were able to compute some asymptotic contraction factors for acyclic graphs in 1964, but only considered the constant degree case and omitted correction factors of fixed size, reducing the formula above to a multiple of $\left< \Rog(X') \right>$. Eichinger and Martin~\cite{Eichinger1978:reduction} observed that the Kirchhoff matrix for a subdivided graph was highly structured and gave a method finding eigenvalues of the Kirchhoff matrix in terms of a smaller matrix which is similar to our weighted graph Laplacian. Bonchev et al.~\cite{Bonchev:2002vy} were able to compute explicitly some subdivided graph radii of gyration (again for acyclic graphs) using combinatorial methods. 

Since the expected radius of gyration of any graph is given in terms of its Kirchhoff index, as mentioned in~\eqref{eq:kirchhoff}, we can also write our main theorem purely in terms of Kirchhoff indices if we introduce the~\emph{weighted} Kirchhoff index $\Kirchhoff(\graphG,\Omega) = \omega \tr \mathcal{L}_{\Omega}^+$. As above, $\Omega = (\omega_1, \dotsc, \omega_{\verticesV})$ is a collection of positive weights $\omega_i$, $\mathcal{L}_{\Omega} = D^{-1/2} L D^{-1/2}$ where $D$ is the diagonal matrix of $\omega_i$, and $\omega = \sum_i \omega_i$. Using~\eqref{eq:total weight} and setting $\sigma^2=1$, we convert~\autoref{thm:main} into the (equivalent) 

\begin{corollary}\label{cor:main cor}
Suppose that $\graphG$ is a subdivision of the connected graph $\graphG'$ where each edge has been subdivided into $n$ parts. Then
\[
\Kirchhoff(\graphG) = \frac{n-1}{2} \left( \frac{n (n-1)}{2} \Kirchhoff \left(\graphG',\deg v_i + \frac{2}{n-1} \right) +  \frac{n+1}{3}\left(\verticesV - \frac{1}{2}\right) \operatorname{Loops}(\graphG') - \frac{n+1}{6}\left( \verticesV - 1 \right)\right).
\]
\end{corollary}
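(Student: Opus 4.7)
The corollary is a direct algebraic restatement of \autoref{thm:main} using \eqref{eq:kirchhoff} and the definition $\Kirchhoff(\graphG,\Omega) = \omega \tr \mathcal{L}_\Omega^+$, so my plan is purely computational. First I would set $\sigma^2 = 1$ in \autoref{thm:main} and apply \eqref{eq:kirchhoff} in the form $\Kirchhoff(\graphG) = \tfrac{\verticesV^2}{d}\langle \Rog(X)\rangle$. This immediately gives
\[
\Kirchhoff(\graphG) = \frac{\verticesV(n^2-1)}{2}\left(\frac{n}{n+1}\tr\mathcal{L}_{\Omega'}^+(\graphG') + \frac{1}{3}\left(1-\frac{1}{2\verticesV}\right)\operatorname{Loops}(\graphG') - \frac{1}{6}\left(1 - \frac{1}{\verticesV}\right)\right),
\]
and the factor $(n^2-1) = (n-1)(n+1)$ is the source of the overall $\tfrac{n-1}{2}$ that is pulled out in the statement.

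The second step is to rewrite the trace in terms of the weighted Kirchhoff index of $\graphG'$. By definition, $\tr\mathcal{L}_{\Omega'}^+(\graphG') = \tfrac{1}{\omega}\Kirchhoff(\graphG',\Omega')$ where $\omega$ is the total weight. Here the weights are $\omega_i = \deg v_i' + \tfrac{2}{n-1}$, so
\[
\omega = \sum_{i=1}^{\verticesV'}\left(\deg v_i' + \frac{2}{n-1}\right) = 2\edgesE' + \frac{2\verticesV'}{n-1} = \frac{2((n-1)\edgesE' + \verticesV')}{n-1} = \frac{2\verticesV}{n-1},
\]
where the last equality uses the identity $\verticesV = (n-1)\edgesE' + \verticesV'$ for an $n$-part subdivision.

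The third step is bookkeeping. Substituting the expression for $\tr\mathcal{L}_{\Omega'}^+$, the coefficient of $\Kirchhoff(\graphG',\Omega')$ becomes $\tfrac{\verticesV(n^2-1)}{2}\cdot\tfrac{n}{n+1}\cdot\tfrac{n-1}{2\verticesV} = \tfrac{n(n-1)^2(n+1)}{4(n+1)} = \tfrac{n(n-1)^2}{4}$, which is exactly $\tfrac{n-1}{2}\cdot\tfrac{n(n-1)}{2}$. The remaining two terms already contain a $(n^2-1) = (n-1)(n+1)$ factor, so pulling out $\tfrac{n-1}{2}$ and simplifying $\tfrac{\verticesV}{1}(1-\tfrac{1}{2\verticesV}) = \verticesV - \tfrac12$ and $\verticesV(1-\tfrac{1}{\verticesV}) = \verticesV - 1$ yields the claimed $\tfrac{n+1}{3}(\verticesV - \tfrac12)\operatorname{Loops}(\graphG')$ and $-\tfrac{n+1}{6}(\verticesV-1)$ terms respectively.

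There is no real obstacle here: the entire content of the corollary is packaged inside \autoref{thm:main} and the definition of $\Kirchhoff(\graphG,\Omega)$. The only step requiring care is the computation of $\omega$, since the identity $\omega = \tfrac{2\verticesV}{n-1}$ is what precisely cancels the $\tfrac{n}{n+1}$ factor in \autoref{thm:main} against the $(n^2-1)/(2\verticesV)$ prefactor, producing the clean $\tfrac{n(n-1)}{2}\Kirchhoff(\graphG',\Omega')$ term. I would include a brief remark that this cancellation is the algebraic justification for the particular weight $\tfrac{2}{n-1}$ added to each degree in $\Omega'$.
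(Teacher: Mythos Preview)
Your proposal is correct and follows exactly the route the paper takes: the paper simply remarks that using~\eqref{eq:total weight} (the identity $\omega = \tfrac{2\verticesV}{n-1}$) together with~\eqref{eq:kirchhoff} and the definition $\Kirchhoff(\graphG,\Omega) = \omega \tr \mathcal{L}_\Omega^+$, and setting $\sigma^2 = 1$, converts \autoref{thm:main} into the corollary. You have merely written out the bookkeeping that the paper leaves implicit.
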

For graphs with constant vertex degree, our weighted Kirchhoff index is a multiple of the standard Kirchhoff index. The formula then simplifies quite a bit, matching results of Gao, Luo, and Liu~\cite[Theorem 3.5]{Gao:2012jk}. Yang~\cite{Yang2016} gives a formula for the Kirchhoff index of the subdivided graph with $n = 2$ in terms of three different Kirchhoff indices with various weights. Theorem~2.3 in that paper agrees with~\autoref{cor:main cor}. Similarly, Yang and Klein~\cite{Yang:2015eq} give a formula for the Kirchhoff index of the $k$-th iterated subdivision ($n = 2^k$) in terms of the same three Kirchhoff indices. Making the appropriate substitutions, their Theorem 3.2 also agrees with~\autoref{cor:main cor}. Indeed, the weighted sums of Kirchhoff indices with different weights in both~\cite{Yang2016} and~\cite{Yang:2015eq} can be factored to produce the single weighted Kirchhoff index in \autoref{cor:main cor}. We note that~Carmona, Mitjana, and Mons\'o~\cite{Carmona:2017ge} also give a formula for the Kirchhoff index of a subdivision, but their Corollary 4.4 is stated in terms of entries in $L^+$ and so is harder to compare with the version above. 

\section{Notation and review of theory}

Throughout this paper, we will make frequent use of the notation $1_{m \times n}$ to denote the $m \times n$ matrix of all $1$s. We will use $1_m$ to denote the $m$-dimensional column vector of all $1$s.\footnote{In what follows we occasionally use the notation $1_{m \times 1}$ when the object being described is conceptually an $m \times 1$ matrix rather than a vector, even though there is no visual distinction between an $m \times 1$ matrix of $1$s and an $m$-dimensional column vector of $1$s.}

We will deal with connected, directed graphs $\graphG$ with $\verticesV$ vertices $\vertex_1, \dotsc, \vertex_{\verticesV}$ and $\edgesE$ edges $\edge_1, \dotsc, \edge_{\edgesE}$, allowing loop and multiple edges. We regard the graph as a simplicial complex\footnote{If we have loop edges or multiple edges, this is technically a $\Delta$-complex, not a simplicial complex (see~\cite[Section 2.1]{HatcherAT}). The distinction won't matter for our purposes.} and introduce real vector spaces $C_0 \simeq \R^{\verticesV}$, $C_1 \simeq \R^{\edgesE}$ whose entries are (formal) linear combinations of vertices or edges. We will also need the spaces $C^0 = \Hom(C_0, \R^d) \simeq \Mat_{d \times \verticesV}$ and $C^1 = \Hom(C_1, \R^d) \simeq \Mat_{d \times \edgesE}$ of linear maps from $C_0$ and $C_1$ to $\R^d$ (which can be written as $d \times \verticesV$ and $d \times \edgesE$ real matrices with respect to the bases we've chosen for $C_0$ and $C_1$ by numbering vertices and edges.) We'll refer to elements of $C^0$ and $C^1$ (meaning matrices or linear maps) by $X$ and $W$, respectively. 

Recall that for any $m$ and $n$, the isomorphism $\Vec \from \Mat_{m \times n} \to \R^{m \cdot n}$ is given by stacking columns atop one another, letting $\Vec(C) = (c_{11}, \dotsc, c_{m1}, c_{12}, \dotsc, c_{mn})^T$. If $A \in \Mat_{p \times n}$ and $B \in \Mat_{q \times m}$, then we have the ``vec trick'' $(A \otimes B) \Vec{C} = \Vec(B C A^T)$ where $\otimes$ is the Kronecker product of matrices (or tensor product of linear maps). When a subspace $S$ of $\Mat_{m \times n}$ is isomorphic (by $\Vec$) to a subspace $T$ of $\R^{m \cdot n}$, we'll write $S \cong T$. When subspaces $S$ and $T$ of the same ambient space ($\Mat_{m \times n}$ or $\R^{m \cdot n}$) are the same, we'll write $S=T$. In general, $X$ and $W$ will denote matrices in $\Mat_{d \times \verticesV}$ and $\Mat_{d \times \edgesE}$ while $\Vec X$ and $\Vec W$ denote the corresponding vectors in $\R^{d \cdot \verticesV}$ and $\R^{d \cdot \edgesE}$. However, we'll abuse notation slightly and use $C^0$ to denote either $\Mat_{d \times \verticesV}$ or $\R^{d \cdot \verticesV}$ (trusting the reader to figure out which one is meant from context) and do the same for $C^1$. 

Since $\graphG$ is directed, each edge $i$ travels from some vertex $\tail(i)$ to another vertex $\head(i)$ (which may be the same if $\graphG$ has loop edges).  These define a linear map called the boundary map $\bdy : C_1 \rightarrow C_0$, which has $\bdy(\edge_i) = \vertex_{\head(i)} - \vertex_{\tail(i)}$. The boundary map is given by the $\verticesV \times \edgesE$ incidence matrix of the graph $\graphG$. Here are some useful facts about $\bdy$:

\begin{proposition}[{cf.~\cite[Theorem VIII.3.4]{MasseyBasicCourse}}] \label{prop:bdy facts}
The space $\ker \bdy$ is the linear subspace spanned by all cycles in $\graphG$. The dimension of this vector space is variously called the first Betti number, cycle rank, or first homology of $\graphG$. It is given by 
\[
\Loops(\graphG) = \chi(\graphG) = \edgesE - \verticesV + 1.
\]
Since $\graphG$ is connected, the space $\ker \bdy^T$ is the one-dimensional space spanned by $1_{\verticesV}$.
\end{proposition}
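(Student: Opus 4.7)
The plan is to prove the three claims separately using a spanning tree argument, since all three are classical facts about the incidence matrix included here mainly to fix notation.

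For the first claim, I would start by verifying that any (directed) cycle $\gamma = \sum_j \pm \edge_{i_j}$ lies in $\ker \bdy$: applying $\bdy$ telescopes to $0$ because consecutive edges along the cycle share a vertex that appears with opposite signs. To show cycles span $\ker \bdy$, I would pick a spanning tree $T$ of $\graphG$, which exists since $\graphG$ is connected. For each of the $\edgesE - (\verticesV - 1)$ edges $e$ not in $T$, there is a unique \emph{fundamental cycle} $c_e$ consisting of $e$ together with the path in $T$ connecting its endpoints. These fundamental cycles are linearly independent because each $c_e$ is the only one among them containing the non-tree edge $e$, so they span a subspace of $\ker \bdy$ of dimension at least $\edgesE - \verticesV + 1$.

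For the dimension count, I would compute $\operatorname{rank} \bdy$ by identifying its image. Each column of the incidence matrix has entries summing to $0$ (either a $+1$ and a $-1$ for a non-loop edge, or all zeros for a loop), so $\im \bdy \subseteq \{v \in \R^{\verticesV} : 1_{\verticesV}^T v = 0\}$. For the reverse inclusion I would reuse the spanning tree: any difference $\vertex_u - \vertex_w$ of basis vectors can be written as a signed sum of boundaries of tree edges along the unique path in $T$ from $w$ to $u$, and such differences span the augmentation kernel. Hence $\im \bdy$ is exactly that kernel, which has dimension $\verticesV - 1$. Rank--nullity then gives $\dim \ker \bdy = \edgesE - \verticesV + 1$, matching the fundamental cycle count above and confirming that those cycles form a basis.

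Finally, for $\ker \bdy^T$, I would observe that $(\bdy^T v)_i = v_{\head(i)} - v_{\tail(i)}$, so $v \in \ker \bdy^T$ iff $v$ is constant along every edge of $\graphG$ (loop edges contribute the trivial condition $v_u - v_u = 0$). Since $\graphG$ is connected, this forces $v$ to be globally constant, i.e., $v \in \operatorname{span}(1_{\verticesV})$. There is no genuine obstacle in any of these steps; the only place to be a little careful is to make sure loops and multi-edges don't cause trouble, but in each argument they either contribute the trivial relation or an extra non-tree edge whose fundamental ``cycle'' is the loop itself, so they fit naturally into the spanning tree framework.
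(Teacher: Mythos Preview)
Your argument is correct and complete: the spanning tree construction of fundamental cycles, the identification of $\im \bdy$ with the augmentation kernel, and the connectedness argument for $\ker \bdy^T$ are all valid, and you handle loops and multi-edges appropriately.

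However, there is no proof in the paper to compare against. The proposition is stated with a citation to Massey's textbook and left unproved; the authors treat it as a standard fact imported to fix notation. So your proposal is not so much a different route as the only route on offer: you have supplied the elementary graph-theoretic proof that the paper outsources to the literature. This is exactly what one would write if asked to make the paper self-contained at this point.
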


\begin{definition}
A collection of~\emph{vertex positions} for $\graphG$ in $\R^d$ is given by $X \in C^0$, where the position of the $i$-th vertex $x_i = X(\vertex_i)$ is the $i$-th column of the matrix $X$. A collection of~\emph{edge displacements} for $\graphG$ in $\R^d$ is given by $W \in C^1$, where the displacement along the $i$-th edge $w_i = W(\edge_i)$ is the $i$-th column of the matrix $W$. We say that $W$ and $X$ are~\emph{compatible} if $w_i = x_{\head(i)} - x_{\tail(i)}$. 
\end{definition}
Unwinding definitions, it now follows immediately that
\begin{equation}
W \text{ and } X \text{ are compatible} \iff W = X \bdy \iff \Vec{W} = (\bdy^T \otimes I_d) \Vec{X}. \label{eq:boundary and kronecker}
\end{equation}
We first analyze the kernel and image of the matrix $\bdy^T \otimes I_d$. These are subspaces of $\R^{d \cdot \edgesE}$, but it is easier to describe them as subspaces of $\Mat_{d \times \edgesE}$ using tensor products.\footnote{Given a subspace $U \subset \R^n$, these match the usual definitions of the tensor product $U^* \otimes \R^m$ as a linear subspace of $(\R^n)^* \otimes \R^m \simeq \Hom(\R^n,\R^m) \simeq \Mat_{m \times n}$ (see~\cite[Theorem 14.6]{RomanAdvancedLinearAlgebra}), where $^*$ denotes the dual space. We have dropped the $^*$ below to avoid confusion.}
\begin{lemma} \label{lem:ker and image as tensors}
The kernel and image of $\bdy^T \otimes I_d$ obey
\[
\begin{aligned}
\ker (\bdy^T \otimes I_d) &\cong (\ker \bdy^T) \otimes \R^d = \{ X \in \Mat_{d \times \verticesV} \mid \text{ each column vector of $X^T \in \ker \bdy^T$} \} \\
\im (\bdy^T \otimes I_d) &\cong (\im \bdy^T) \otimes \R^d = \{ W \in \Mat_{d \times \edgesE} \mid \text{ each column vector of $W^T \in \im \bdy^T$} \},
\end{aligned}
\]
where we remind the reader that $\cong$ means ``isomorphic by $\Vec$''.
\end{lemma}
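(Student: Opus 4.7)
The plan is to apply the \emph{vec trick} of~\eqref{eq:boundary and kronecker} directly, reducing both claims to properties of the matrix equation $W = X \bdy$ and its transpose $W^T = \bdy^T X^T$. Since $I_d$ is invertible, both the kernel and the image behave well under the Kronecker product, so the argument should be essentially bookkeeping once the identifications are made carefully.

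For the kernel, I would observe that $\Vec X \in \ker(\bdy^T \otimes I_d)$ iff $(\bdy^T \otimes I_d)\Vec X = \Vec(X \bdy) = 0$, iff $X \bdy = 0$. Transposing, this is equivalent to $\bdy^T X^T = 0$, i.e.\ each column of $X^T$ lies in $\ker \bdy^T$. The subspace of such matrices $X \in \Mat_{d \times \verticesV}$ is exactly the standard realization of $(\ker \bdy^T) \otimes \R^d$ as a subspace of $\Mat_{d \times \verticesV}$. For the image, the same reasoning gives that $\Vec W \in \im(\bdy^T \otimes I_d)$ iff there is some $X \in C^0$ with $W = X \bdy$, equivalently $W^T = \bdy^T X^T$; this holds iff every column of $W^T$ is of the form $\bdy^T v$ for some $v \in \R^{\verticesV}$ (take $v$ to be the corresponding column of $X^T$), i.e.\ iff every column of $W^T$ lies in $\im \bdy^T$.

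The only real obstacle is keeping the transposes straight and verifying that the set of matrices whose \emph{transposed} columns lie in a fixed subspace $U \subseteq \R^n$ is the correct incarnation of $U \otimes \R^d$ inside $\Mat_{d \times n}$. This is routine from the footnoted identification $U \otimes \R^d \subseteq (\R^n)^* \otimes \R^d \simeq \Mat_{d \times n}$, but it is the only place where one must be careful not to conflate $U \otimes \R^d$ with the seemingly-natural-but-wrong subspace of matrices whose columns (rather than transposed columns) lie in $U$.
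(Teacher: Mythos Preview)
Your proposal is correct and follows exactly the paper's own approach: the paper's entire proof is the one-line observation that $\Vec W = (\bdy^T \otimes I_d)\Vec X \iff W = X\bdy \iff W^T = \bdy^T X^T$, which is precisely the vec trick you invoke. Your version simply fills in the straightforward details about reading off the kernel and image column-by-column from $W^T = \bdy^T X^T$.
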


\begin{proof} We know $\Vec W = (\bdy^T \otimes I_d) \Vec{X} \iff W = X \bdy$. Equivalently, $W^T = \bdy^T X^T$.
\end{proof}

This lemma implies that not every collection of edge displacements $W$ is compatible with a collection of vertex positions $X$. This is not surprising: if $\graphG$ has cycles, then the sum of edge displacements around each cycle must vanish. 

\begin{proposition} \label{prop:ker and image described}
We have 
\[
\begin{aligned}
\im (\bdy^T \otimes I_d) &\cong (\ker \bdy)^\perp \otimes \R^d \\
\ker (\bdy^T \otimes I_d) &\cong (\ker \bdy^T) \otimes \R^d = \{ 1_{1 \times \verticesV} \otimes u \in \Mat_{d \times \verticesV} \mid u \in \R^d \},
\end{aligned}
\]
while $\dim (\im \bdy^T \otimes I_d) = d (\edgesE - \Loops(\graphG))$ and $\dim (\ker \bdy^T \otimes I_d) = d$.
\end{proposition}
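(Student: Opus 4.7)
The plan is to bootstrap everything off of \autoref{lem:ker and image as tensors}, which already identifies $\ker(\bdy^T \otimes I_d)$ and $\im(\bdy^T \otimes I_d)$ (up to $\Vec$) with $(\ker \bdy^T) \otimes \R^d$ and $(\im \bdy^T) \otimes \R^d$, respectively, where these tensor products are viewed as subspaces of $\Mat_{d \times \verticesV}$ and $\Mat_{d \times \edgesE}$. So the task reduces to (i) identifying $\ker \bdy^T$ and $\im \bdy^T$ as subspaces of $C_0$ and $C_1$, (ii) translating the resulting tensor products into matrix form, and (iii) counting dimensions.

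For the kernel, I would invoke \autoref{prop:bdy facts}: since $\graphG$ is connected, $\ker \bdy^T = \Span(1_{\verticesV})$. Tensoring with $\R^d$ and unpacking the Kronecker product notation, a general element of $(\ker \bdy^T) \otimes \R^d \subset \Mat_{d \times \verticesV}$ is exactly a matrix of the form $u\, 1_{1 \times \verticesV} = 1_{1 \times \verticesV} \otimes u$ for some $u \in \R^d$, i.e.\ a matrix whose columns are all equal to a single vector $u$ in $\R^d$. This is the explicit description given in the statement, and it immediately yields $\dim \ker(\bdy^T \otimes I_d) = d \cdot \dim \ker \bdy^T = d$.

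For the image, I would use the fundamental theorem of linear algebra, $\im \bdy^T = (\ker \bdy)^\perp$, viewed as a subspace of $C_1 \simeq \R^{\edgesE}$, which gives the first claimed identification after tensoring. For the dimension, \autoref{prop:bdy facts} gives $\dim \ker \bdy = \Loops(\graphG)$, so $\dim \im \bdy^T = \edgesE - \Loops(\graphG)$, and hence $\dim \im(\bdy^T \otimes I_d) = d(\edgesE - \Loops(\graphG))$ as claimed. (As a consistency check, this also equals $d(\verticesV-1)$, matching $\dim(\im \bdy^T)\cdot d$ from the rank-plus-nullity count on $\bdy^T$.)

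There is no real obstacle here — everything follows from the preceding lemma and proposition together with elementary linear algebra. The only place to be careful is bookkeeping: making sure the isomorphism $\Vec$ is applied on the correct side when translating between the tensor-product description of $(\ker \bdy^T)\otimes \R^d$ as a subspace of $\Mat_{d \times \verticesV}$ and the Kronecker-product formula $1_{1 \times \verticesV} \otimes u$, which is why it is worth explicitly noting that this matrix has every column equal to $u$.
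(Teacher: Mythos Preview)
Your proposal is correct and follows essentially the same approach as the paper: both arguments bootstrap off \autoref{lem:ker and image as tensors}, use $\im \bdy^T = (\ker \bdy)^\perp$ for the image, invoke \autoref{prop:bdy facts} for $\ker \bdy^T = \Span(1_{\verticesV})$ and $\dim \ker \bdy = \Loops(\graphG)$, and then read off the matrix description and dimensions. The only cosmetic difference is that the paper phrases the kernel description row-by-row (each row of $X$ is $u_i 1_{\verticesV}^T$) while you phrase it column-by-column (each column of $X$ equals $u$); these are of course equivalent.
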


\begin{proof}
By~\autoref{lem:ker and image as tensors}, $\im (\bdy^T \otimes I_d) \cong (\im \bdy^T) \otimes \R^d$. It's standard that $(\im \bdy^T) \otimes \R^d = (\ker \bdy)^\perp \otimes \R^d$. By~\autoref{prop:bdy facts}, $(\ker \bdy)^\perp$ has dimension $\edgesE - \Loops(\graphG)$. Tensoring with $\R^d$ multiplies this dimension by $d$, as each of the $d$ rows of any $W \in (\ker \bdy)^\perp \otimes \R^d$ lies in $(\ker \bdy)^\perp$.  

For the second part, $\ker (\bdy^T \otimes I_d) = (\ker \bdy^T) \otimes \R^d$ by~\autoref{lem:ker and image as tensors}. Suppose $X$ is any matrix in $(\ker \bdy^T) \otimes \R^d$. For each $i \in 1, \dotsc, d$, the $i$-th row of  $X$ lies in $\ker \bdy^T$ and by~\autoref{prop:bdy facts} is given by $u_i 1_{\verticesV}$ for some $u_i$. This means that $X = 1_{1 \times \verticesV} \otimes u$ where $u$ is the column vector of $u_i$, as claimed. \end{proof}

We have seen how to construct a compatible $W \in C^1$ from $X \in C^0$. To go in the other direction, we may use the Moore–Penrose pseudoinverse, writing $\Vec X = (\bdy^T \otimes I_d)^+ \Vec W$. Therefore, we now compute the kernel and image of $(\bdy^T \otimes I_d)^+$. 

\begin{lemma} \label{lem:pseudoinverse kernel and image} We have
\[
\begin{aligned}
\im (\bdy^{T} \otimes I_d)^+ &\cong \{X \in C^0 \mid X 1_{\verticesV} = 0 \} = \{ X \in C^0 \mid \text{ the center of mass } \frac{1}{\verticesV} \sum_i x_i = 0 \}\\
\ker (\bdy^T \otimes I_d)^+ &\cong \ker \bdy \otimes \R^d = \{ W \in \Mat_{d \times \edgesE} \mid \text{ each column vector of $W^T \in \ker \bdy$} \},
\end{aligned} 
\]
where $\dim \im (\bdy^T \otimes I_d)^+ = d(\verticesV - 1)$ and $\dim \ker (\bdy^T \otimes I_d)^+ = d \cdot \Loops(\graphG)$.
\end{lemma}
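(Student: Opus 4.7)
The plan is to combine the standard Moore--Penrose identities $\im A^+ = (\ker A)^\perp$ and $\ker A^+ = (\im A)^\perp$, applied with $A = \bdy^T \otimes I_d$, with the explicit descriptions of $\ker A$ and $\im A$ from \autoref{prop:ker and image described}. Throughout, I would use that under $\Vec$, the Frobenius inner product on $\Mat_{d \times m}$ corresponds to the standard inner product on $\R^{d \cdot m}$, so that orthogonal complements may be computed in either picture without ambiguity.

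For the image, I would begin from the description $\ker(\bdy^T \otimes I_d) \cong \{1_{1 \times \verticesV} \otimes u : u \in \R^d\}$ supplied by \autoref{prop:ker and image described} and compute, for any $X \in \Mat_{d \times \verticesV}$ and $u \in \R^d$,
\[
\langle X, 1_{1 \times \verticesV} \otimes u \rangle = \sum_{i=1}^d \sum_{j=1}^\verticesV X_{ij} u_i = u^T (X 1_\verticesV).
\]
Requiring this to vanish for every $u \in \R^d$ is equivalent to $X 1_\verticesV = 0$, and since $X 1_\verticesV = \sum_j x_j$, this is precisely the vanishing-center-of-mass condition in the statement.

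For the kernel, I would prove the general fact that for any subspace $U \subseteq \R^m$, the Frobenius orthogonal complement of $U \otimes \R^d \subset \Mat_{d \times m}$ (matrices whose rows, read as column vectors, lie in $U$) is exactly $U^\perp \otimes \R^d$. This follows from a one-line row-by-row computation together with the dimension identity $\dim(U \otimes \R^d) + \dim(U^\perp \otimes \R^d) = dm$. Applying this with $U = (\ker \bdy)^\perp$ and invoking \autoref{prop:ker and image described} yields $\ker(\bdy^T \otimes I_d)^+ \cong \ker \bdy \otimes \R^d$, which is the claimed description.

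The dimension counts then follow immediately: \autoref{prop:bdy facts} gives $\dim \ker \bdy = \Loops(\graphG)$, hence $\dim(\ker \bdy \otimes \R^d) = d\, \Loops(\graphG)$; and $\{X : X 1_\verticesV = 0\}$ is cut out by $d$ independent linear equations on $\Mat_{d \times \verticesV}$, giving dimension $d(\verticesV - 1)$. I do not anticipate any real obstacle here beyond bookkeeping---specifically, keeping the row/column conventions straight as one translates between the tensor product $U \otimes \R^d$, the matrix description in terms of rows of $X$, and the vectorized description in $\R^{d \cdot m}$.
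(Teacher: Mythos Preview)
Your proposal is correct and follows essentially the same approach as the paper: both use the standard Moore--Penrose identities together with the tensor-product structure and \autoref{prop:bdy facts}. The only cosmetic difference is that the paper uses the equivalent identities $\im A^+ = \im A^T$ and $\ker A^+ = \ker A^T$ and then applies (an analogue of) \autoref{lem:ker and image as tensors} directly to $\bdy \otimes I_d$, thereby taking orthogonal complements at the level of $\R^{\verticesV}$ and $\R^{\edgesE}$ rather than at the level of $\Mat_{d \times \verticesV}$ and $\Mat_{d \times \edgesE}$; this sidesteps the need to separately verify $(U \otimes \R^d)^\perp = U^\perp \otimes \R^d$.
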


\begin{proof} We first compute (the $\cong$ is analogous to the proof of~\autoref{lem:ker and image as tensors}) 
\[
\im  (\bdy^{T} \otimes I_d)^+ = \im(\bdy^{T} \otimes I_d)^T = \im (\bdy \otimes I_d) \cong (\im \bdy) \otimes \R^d = (\ker \bdy^T)^\perp \otimes \R^d.
\]
Since $\ker \bdy^T = \Span 1_{\verticesV}$ by~\autoref{prop:bdy facts}, this means that $X \in (\ker \bdy^\perp) \otimes \R^d$ if and only if each row of $X$ is orthogonal to $1_{1 \times \verticesV}$. But this is exactly the condition $X 1_{\verticesV} = 0$. Of course, $X 1_{\verticesV}$ is also the sum of the columns of $X$, so it is equal to $\sum_{i=1}^{\verticesV} x_i$.

On the other hand, $\ker (\bdy^T \otimes I_d)^+ = \ker (\bdy^T \otimes I_d)^T = \ker (\bdy \otimes I_d) \cong (\ker \bdy) \otimes \R^d$. We know $\dim (\ker \bdy)^\perp = \verticesV - \dim \ker \bdy^T = \verticesV - 1$ and $\dim \ker \bdy = \Loops(\graphG)$ from~\autoref{prop:bdy facts}, and tensoring with $\R^d$ multiplies each dimension by $d$.
\end{proof}

We are now ready to give the construction of a Gaussian topological polymer in our terms; it is a probability distribution on $\im (\bdy^T \otimes I_d)^+$ given by pushing forward a Gaussian distribution on $C^1$ conditioned on the hypothesis that $W \in \im (\bdy^T \otimes I_d) \subsetneq C^1$. Because Gaussians are very special, the conditional distribution on $\im (\bdy^T \otimes I_d)$ is the same as the pushforward distribution under the orthogonal projection to $\im (\bdy^T \otimes I_d)$. 

\begin{definition}\label{def:projector P}
Let $P : C^1 \to C^1$ be the orthogonal projector to $\im (\bdy^T \otimes I_d) \subset C^1$. We note that 
\[
P = (\bdy^T \otimes I_d)(\bdy^T \otimes I_d)^+ = (\bdy^T \otimes I_d)(\bdy^{T+} \otimes I_d) = 
(\bdy^T \bdy^{T+}) \otimes I_d.
\]
\end{definition}

\begin{definition}
\label{defn:phantom network distribution}
If $\Vec W \dist \mathcal{N}(0,\sigma^2 P)$ and $\Vec X = (\bdy^T \otimes I_d)^+ \Vec W$, we say $X$ is distributed as a~\emph{Gaussian topological polymer} with edge variance $\sigma^2$. We note that the covariance matrix of $\Vec X$ is given explicitly by
\[
\begin{aligned}
(\bdy^T \otimes I_d)^+ (\sigma^2 P) (\bdy^T \otimes I_d)^{T+} &= \sigma^2 (\bdy^{T+} \otimes I_d) (\bdy^T \bdy^{T+} \otimes I_d)(\bdy^+ \otimes I_d) \\
&= \sigma^2 (\bdy^{T+} \bdy^T \bdy^{T+} \bdy^+) \otimes I_d = \sigma^2 \bdy^{T+} \bdy^+ \otimes I_d = \sigma^2 (\bdy \bdy^T)^+ \otimes I_d \\
&= \sigma^2 L^+ \otimes I_d.
\end{aligned}
\]
where $L = \bdy \bdy^T$ is variously known as the graph Laplacian, the combinatorial Laplacian, or the Kirchhoff matrix of the graph $\graphG$, so we may say $X \dist \mathcal{N}(0,\sigma^2 L^+ \otimes I_d)$.
\end{definition}
We point out that the covariance matrix $\sigma^2 P$ for $\Vec W$ is usually~\emph{not} positive definite because $\Vec W$ is supported on the subspace $\im (\bdy^T \otimes I_d)$ of $C^1$. It's more standard in the physics literature to write the probability density function for $X$ instead the covariance matrix. In this form the density of $X$ is proportional to
\[
e^{-\frac{1}{2} (\Vec X)^T (\sigma^2 L \otimes I_d) (\Vec X)} = e^{-\frac{\sigma^2}{2} \tr(XLX^T)} = e^{-\sigma^2 \sum_{i = 1}^{\edgesE} \norm{x_{\head(i)} - x_{\tail(i)}}^2}
\]
but we again note that this form can be misleading. $L$ is positive semi-definite (but not positive definite) because the density is supported \emph{only} on the subspace of $C^0$ given by the centered distributions $\Vec X \in \im(\bdy^T \otimes I_d)^+ \cong \{ X \mid X 1_{\verticesV} = 0\}$. This is the ensemble of embeddings classically considered in~(\cite{James:1947hp}, \cite{Flory1976}, equations (14) and (15)).
In this ensemble, we treat the edges of $\graphG$ as Gaussian springs.  

\section{Gaussian topological polymers based on subdivision graphs}

Now suppose that $\graphG'$ is a (multi-)graph with $\verticesV'$ vertices and $\edgesE'$ edges, and that $\graphG$ is a $\verticesV$ vertex and $\edgesE$ edge graph created by subdividing each edge of $\graphG'$ into $n$ parts. It is clear that $\edgesE = n \edgesE'$. We can divide the vertices of $\graphG$ into $\verticesV'$ ``junction'' vertices, which correspond to vertices in $\graphG'$, and $\verticesV - \verticesV'$ ``interior'' vertices created by the subdivision. Since the division of each edge of $\graphG'$ into $n$ parts creates $n-1$ new interior vertices in $\graphG$, we have $\verticesV = \verticesV' + (n-1) \, \edgesE'$. (It follows immediately that $\Loops(\graphG') = \Loops(\graphG)$.) 
Further, there are linear maps $f_0 : C_0(\graphG') \to C_0(\graphG)$, which takes vertices in $\graphG'$ to the corresponding vertices in $\graphG$, and $f_1: C_1(\graphG') \to C_1(\graphG)$, which takes each edge $\edge'_i \in \graphG'$ to the (formal) sum of the $n$ edges in $\graphG$ created by subdividing $\edge'_i$. These relationships are shown in~\autoref{fig:structure}.

Any embedding $X$ of $\graphG$ induces a corresponding embedding $\Vec X' = (f_0^T \otimes I_d) \Vec X$; this just assigns to each $\vertex'_i$ in $\graphG'$ the position of the corresponding junction vertex $f_0(\vertex'_i)$ in $\graphG$. Suppose that $X$ is distributed as a Gaussian topological polymer. What is the distribution on $X'$? It would be quite natural to hope that $X'$ is also distributed as a Gaussian topological polymer-- where the edge displacements of $X$ are chosen from Gaussians (conditioned on loop closures), the edge displacements of $X'$ are chosen from sums of Gaussians (conditioned on the closure of corresponding loops). 

This is not quite the case: if $X$ is distributed as a Gaussian topological polymer, it always has center of mass $\sum_{i=1}^{\verticesV} x_i$ at the origin, but there is no reason to suppose that the center of mass of the subset of $x_i$ corresponding to junction vertices is~\emph{also} at the origin. To fix this, we introduce the ``zero-mean'' matrix $Z_m = I_{m} - \frac{1}{m} 1_{m \times m}$ which orthogonally projects each $u \in \R^m$ to the closest vector with $\sum u_i = 0$. 

\begin{figure}[t]
\hfill
\begin{overpic}[width=0.4\textwidth]{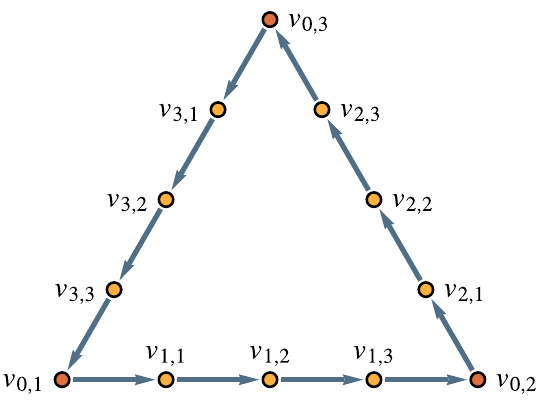} 
\end{overpic}
\hfill
\begin{overpic}[width=0.4\textwidth]{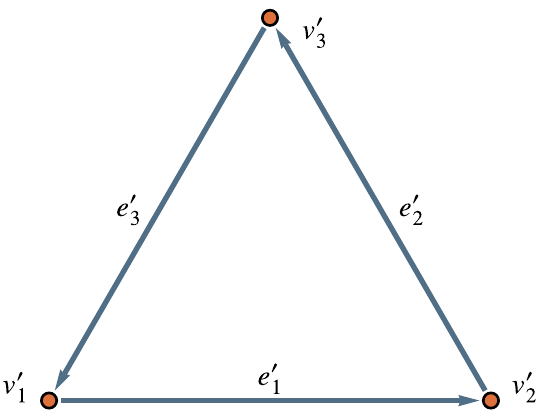} 
\end{overpic}
\hfill
\hphantom{.}
\caption{The graph $\graphG$ (at left) is a subdivision of the graph $\graphG'$ at right. The vertices $\vertex_{i,j}$ and edges $\edge_{i,j}$ of $\graphG$ are numbered to correspond with vertices and edges of $\graphG'$; each $\vertex_{0,j}$ corresponds to a vertex $\vertex'_j$ of $\graphG'$, while vertices $\vertex_{i,1}, \dotsc, \vertex_{i,n-1}$ are those created by subdividing edge $\edge'_i$ of $\graphG'$ into $n$ new edges. The edges $\edge_{i,j}$ of $\graphG$ aren't labeled in the picture, but are constructed so that $\edge_{i,1}, \dotsc, \edge_{i,n}$ are the edges created by subdividing $\edge'_i$. Here $n=4$, $\verticesV = 22$, $\edgesE = 24$, $\verticesV' = 4$ and $\edgesE'= 6$ so we can check that $\edgesE = n \edgesE'$ and $\verticesV = \verticesV' + (n-1) \edgesE'$. The map $f_0(\vertex'_i) = \vertex_{0,i}$ while $f_1(\edge'_i) = \sum_{j=1}^{n} \edge_{i,j}$. }
\label{fig:structure}
\end{figure}

\begin{proposition} \label{prop:embeddings of structure graph}
If $X$ is distributed as a Gaussian topological polymer (for $\graphG$) with edge variance $\sigma^2$ and $\Vec X' = (f_0^T \otimes I_d) \Vec X$, then $(Z_{\verticesV'} \otimes I_d) \Vec X'$ is distributed as a Gaussian topological polymer (for $\graphG'$) with edge variance $n \sigma^2$. 
\end{proposition}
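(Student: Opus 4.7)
The plan is to work on the level of edge displacements: construct a vector $W'$ of edge displacements on $\graphG'$, show that $\Vec W' \sim \mathcal{N}(0, n\sigma^2 P')$ with $P' = \bdy'^T \bdy'^{T+} \otimes I_d$, and show that $X'_c \ceq X' Z_{\verticesV'}$ (with $X' = X f_0$, obtained by applying the ``vec trick'' to $\Vec X' = (f_0^T \otimes I_d) \Vec X$) is the unique centered embedding compatible with $W'$ in $\graphG'$. By the uniqueness part of \autoref{lem:pseudoinverse kernel and image}, this forces $\Vec X'_c = (\bdy'^T \otimes I_d)^+ \Vec W'$, which is exactly the definition of a Gaussian topological polymer on $\graphG'$ with edge variance $n\sigma^2$.

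Concretely, I would set $W' \ceq W f_1$, so $\Vec W' = (f_1^T \otimes I_d) \Vec W$. The simplicial-complex identity $\bdy f_1 = f_0 \bdy'$ is immediate by telescoping, $\bdy f_1(\edge'_i) = \bdy \sum_j \edge_{i,j} = \vertex_{0,\head(i)} - \vertex_{0,\tail(i)} = f_0 \bdy' \edge'_i$. Hence $X' \bdy' = X f_0 \bdy' = X \bdy f_1 = W f_1 = W'$, so $X'$ is compatible with $W'$; replacing $X'$ by $X'_c$ preserves this compatibility since $1_{\verticesV' \times \verticesV'} \bdy' = 0$, and $X'_c 1_{\verticesV'} = 0$ by construction of $Z_{\verticesV'}$. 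All that remains is to identify the distribution of $\Vec W'$.

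As the linear image of a centered Gaussian, $\Vec W'$ is centered Gaussian with covariance $\sigma^2 (f_1^T \bdy^T \bdy^{T+} f_1) \otimes I_d$, so the whole claim comes down to the scalar identity $f_1^T (\bdy^T \bdy^{T+}) f_1 = n \, \bdy'^T \bdy'^{T+}$ on $\R^{\edgesE'}$. A direct edge-by-edge count yields $f_1^T f_1 = n I_{\edgesE'}$. Writing $\bdy^T \bdy^{T+} = I - P_{\ker \bdy}$ (projection onto $(\ker \bdy)^\perp$), and similarly for $\graphG'$, the identity reduces further to $f_1^T P_{\ker \bdy} f_1 = n P_{\ker \bdy'}$.

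This last identity---the main obstacle---follows from the fact that $f_1$ restricts to a $\sqrt n$-isometric isomorphism $\ker \bdy' \to \ker \bdy$ of cycle spaces. Indeed, $\bdy f_1 = f_0 \bdy'$ together with injectivity of $f_0$ forces $f_1(\ker \bdy') \subset \ker \bdy$; the map $f_1$ is globally injective because $f_1^T f_1 = nI$; and the two cycle spaces have equal dimension since $\Loops(\graphG) = \Loops(\graphG')$. Thus if $\{c_i\}$ is an orthonormal basis of $\ker \bdy'$, then $\{f_1 c_i / \sqrt n\}$ is an orthonormal basis of $\ker \bdy$, giving $P_{\ker \bdy} = \tfrac{1}{n} f_1 P_{\ker \bdy'} f_1^T$. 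Sandwiching by $f_1^T$ and $f_1$ and applying $f_1^T f_1 = nI$ twice yields $f_1^T P_{\ker \bdy} f_1 = n P_{\ker \bdy'}$, completing the reduction.
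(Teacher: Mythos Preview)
Your proof is correct and shares the paper's core technical ingredient: both arguments ultimately rest on the fact that $f_1$ restricts to a $\sqrt{n}$-scaled linear isomorphism $\ker\bdy' \to \ker\bdy$, which in turn follows from $f_1^T f_1 = n I_{\edgesE'}$, the chain identity $\bdy f_1 = f_0 \bdy'$, and $\Loops(\graphG) = \Loops(\graphG')$.

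The packaging differs in two places, and in both your version is a bit more direct. First, to obtain $\Vec X'_c = ((\bdy')^T \otimes I_d)^+ \Vec W'$, the paper uses the algebraic identity $Z_{\verticesV'} = (\bdy')^{T+}(\bdy')^T$ and a chain of matrix substitutions; you instead argue geometrically that $X'_c$ is centered and compatible with $W'$, then invoke uniqueness of the centered compatible embedding (which follows immediately from \autoref{prop:ker and image described} and \autoref{lem:pseudoinverse kernel and image}). Second, for the covariance, the paper proves a full intertwining relation $(f_1^T \otimes I_d) P = P'(f_1^T \otimes I_d)$ as a separate lemma via a somewhat indirect manipulation of projectors; you need only the sandwiched identity $f_1^T P_{\ker\bdy} f_1 = n\,P_{\ker\bdy'}$, and you get it in one line from the explicit formula $P_{\ker\bdy} = \tfrac{1}{n} f_1 P_{\ker\bdy'} f_1^T$ coming from the orthonormal-basis description. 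The paper's intertwining lemma is a slightly stronger statement, but it is not used elsewhere, so your shortcut loses nothing for the purposes of the proposition.
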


The rest of this section is devoted to giving a formal proof of the (intuitively obvious) proposition above and involves some technical detail; readers willing to accept the statement may safely skip ahead.  

\begin{lemma} \label{lem:chain map properties}
If $\bdy$ and $\bdy'$ are the boundary maps for $\graphG$ and $\graphG'$, $f_1^T f_1 = n I_{\edgesE'}$ and $\bdy f_1 = f_0 \bdy'$.
\end{lemma}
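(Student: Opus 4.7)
The plan is to verify each of the two identities directly on the chosen basis of $C_1(\graphG')$ (in the second case) or by a column-inner-product calculation (in the first case), using the explicit descriptions of $f_0$ and $f_1$ given right before the figure.

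For the first identity, I would read off the matrix of $f_1$. The $i$-th column of $f_1$ is $f_1(\edge'_i) = \sum_{j=1}^{n} \edge_{i,j}$, so it is a column of $\edgesE$ entries with ones in the $n$ positions indexed by $\edge_{i,1},\dotsc,\edge_{i,n}$ and zeros elsewhere. Two different columns (coming from distinct edges $\edge'_i$ and $\edge'_{i'}$) have \emph{disjoint} supports, because subdividing different edges of $\graphG'$ produces disjoint sets of edges in $\graphG$. Therefore the Gram matrix $f_1^T f_1$ has entry $n$ on the diagonal (the squared norm of each column) and $0$ off the diagonal, giving $f_1^T f_1 = n I_{\edgesE'}$.

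For the chain-map identity $\bdy f_1 = f_0 \bdy'$, I would check it on each basis vector $\edge'_i \in C_1(\graphG')$. Orient the subdivided edges consistently along $\edge'_i$, so that with the convention $\vertex_{i,0} \ceq f_0(\vertex'_{\tail(i)})$ and $\vertex_{i,n} \ceq f_0(\vertex'_{\head(i)})$, one has $\bdy \edge_{i,j} = \vertex_{i,j} - \vertex_{i,j-1}$ for $j=1,\dotsc,n$. Then
\[
\bdy f_1(\edge'_i) = \sum_{j=1}^{n} \bdy \edge_{i,j} = \sum_{j=1}^{n} \bigl( \vertex_{i,j} - \vertex_{i,j-1} \bigr) = \vertex_{i,n} - \vertex_{i,0} = f_0(\vertex'_{\head(i)}) - f_0(\vertex'_{\tail(i)}),
\]
by telescoping. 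On the other hand $f_0 \bdy'(\edge'_i) = f_0(\vertex'_{\head(i)} - \vertex'_{\tail(i)}) = f_0(\vertex'_{\head(i)}) - f_0(\vertex'_{\tail(i)})$, matching the first computation. Since the two linear maps agree on a basis, they are equal.

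Neither step presents a real obstacle; the only mild nuisance is bookkeeping the orientation conventions so that the telescoping is clean. The loop-edge case (where $\head(i) = \tail(i)$ in $\graphG'$) is handled by the same telescoping computation and requires no separate argument, since both sides then equal $0$ in $C_0(\graphG)$ after cancellation.
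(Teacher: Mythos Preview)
Your argument is correct and is essentially the same as the paper's: both compute $f_1^T f_1$ by observing that the columns of $f_1$ are $0/1$ vectors with disjoint supports of size $n$, and both verify $\bdy f_1 = f_0 \bdy'$ column-by-column via the telescoping cancellation along each subdivided edge. The paper packages the first step slightly differently by writing $f_1 = I_{\edgesE'} \otimes 1_{n\times 1}$ after a suitable edge ordering, but this is only a cosmetic difference.
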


\begin{proof}
We may assume that the edges of $\graphG$ and $\graphG'$ are numbered and oriented so that the $n \cdot \edgesE' \times \edgesE'$ matrix $f_1 = I_{\edgesE'} \otimes 1_{n \times 1}$. Then it's clear that $f_1^T f_1 = n I_{\edgesE'}$. (Renumbering and/or reorienting the edges of $\graphG$ and $\graphG'$ would conjugate $f_1$ by orthogonal matrices and leave $f_1^T f_1$ unchanged.)

Each column of $f_1$ represents a single subdivided edge: applying $\bdy$ to each column of $f_1$ we get the difference of the vertices at the ends of the subdivided edge. These junction vertices are the endpoints of the corresponding  (undivided) edge in $\graphG'$; that is, the output when applying $f_0$ to $\delta'$. 
\end{proof}

\begin{lemma}
\label{lem:homotopy equivalent}
Suppose $P = \bdy^{T} \bdy^{T+} \otimes I_d$ and $P' = (\bdy')^T (\bdy')^{T+} \otimes I_d$. Then $(f_1^T \otimes I_d) P = P' (f_1^T \otimes I_d)$.
\end{lemma}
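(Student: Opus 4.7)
The plan is to first use the Kronecker product identity $(A \otimes I_d)(B \otimes I_d) = AB \otimes I_d$ to reduce the claim to the equality
\begin{equation*}
f_1^T \bdy^T \bdy^{T+} = (\bdy')^T (\bdy')^{T+} f_1^T
\end{equation*}
of maps $C_1 \to C_1'$ (forgetting the $\otimes I_d$). Then recognize $P_0 \ceq \bdy^T \bdy^{T+}$ and $P_0' \ceq (\bdy')^T (\bdy')^{T+}$ as the orthogonal projectors onto $\im \bdy^T \subset C_1$ and $\im (\bdy')^T \subset C_1'$ respectively. Using the orthogonal decomposition $C_1 = \im \bdy^T \oplus \ker \bdy$, it suffices to verify $f_1^T P_0 w = P_0' f_1^T w$ separately for $w$ in each summand.

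For $w \in \im \bdy^T$, write $w = \bdy^T x$. Then $P_0 w = w$, and by the chain map identity $\bdy f_1 = f_0 \bdy'$ from \autoref{lem:chain map properties},
\begin{equation*}
f_1^T P_0 w = f_1^T \bdy^T x = (\bdy f_1)^T x = (\bdy')^T f_0^T x \in \im (\bdy')^T,
\end{equation*}
so $P_0'$ fixes this element and the identity holds on $\im \bdy^T$ immediately. For $w \in \ker \bdy$ we have $P_0 w = 0$, and the identity reduces to $P_0' f_1^T w = 0$, i.e., $f_1^T w \in (\im (\bdy')^T)^\perp = \ker \bdy'$. This is the step that needs actual work, because we have no a priori identity relating $\bdy' f_1^T$ to $\bdy$.

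To establish the containment $f_1^T (\ker \bdy) \subset \ker \bdy'$, work in the edge basis used to prove \autoref{lem:chain map properties}, in which $f_1 = I_{\edgesE'} \otimes 1_{n \times 1}$ and the subdivided edges $\edge_{i,1}, \dotsc, \edge_{i,n}$ inherit a coherent orientation from $\edge'_i$. Every interior vertex $\vertex_{i,j}$ of $\graphG$ has degree two, with $\edge_{i,j}$ incoming and $\edge_{i,j+1}$ outgoing, so the cycle condition $(\bdy w)_{\vertex_{i,j}} = w_{i,j} - w_{i,j+1} = 0$ forces the coefficients of $w$ to be constant along each $f_1$-fiber; write $w_{i,j} = c_i$. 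Then $f_1^T w$ has $i$-th entry $n c_i$, and at each junction vertex $\vertex'_k$ the signed sum defining $(\bdy' f_1^T w)_{\vertex'_k}$ is $n$ times the corresponding signed sum defining $(\bdy w)_{\vertex_{0,k}}$, which vanishes. Thus $\bdy' f_1^T w = 0$, completing the verification of both cases and hence the lemma.

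The main obstacle is this second case: the naive Kronecker and chain-map identities only propagate through $\im \bdy^T$, and one really does have to use the degree-two structure of the interior vertices of a subdivision to see that $f_1^T$ sends cycles to cycles. Once that geometric observation is in hand, the rest is a routine unwinding.
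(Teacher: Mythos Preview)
Your proof is correct. Both you and the paper reduce to showing that $f_1^T$ intertwines the projectors $\bdy^T\bdy^{T+}$ and $(\bdy')^T(\bdy')^{T+}$, and both handle the verification by splitting along an orthogonal decomposition; the easy half (on $\im\bdy^T$) follows identically from the chain identity $\bdy f_1 = f_0\bdy'$. The difference lies in the hard half, i.e.\ in the equivalent statements $f_1^T(\ker\bdy)\subset\ker\bdy'$ (your version) and $f_1\bigl((\ker\bdy')^\perp\bigr)\subset(\ker\bdy)^\perp$ (the paper's dual version). The paper argues abstractly: it first upgrades the chain-map inclusion $f_1(\ker\bdy')\subset\ker\bdy$ to an equality by counting dimensions via $\Loops(\graphG')=\Loops(\graphG)$, and then uses $f_1^T f_1 = nI_{\edgesE'}$ to transfer orthogonality across $f_1$. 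You instead work directly in the edge basis: the degree-two interior vertices force any cycle's coefficients to be constant along each subdivided edge, after which the junction-vertex identity $(\bdy' f_1^T w)_{\vertex'_k} = n\,(\bdy w)_{\vertex_{0,k}}$ is immediate. Your route is more hands-on and makes the subdivision geometry visible; the paper's route is coordinate-free once $f_1^T f_1 = nI$ is known, and isolates exactly which algebraic features of $f_1$ are being used, which would carry over more readily to settings where $f_1$ is not literally a block-constant inclusion.
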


\begin{proof}
We start by showing that $f_1 (\ker \bdy') = \ker \bdy$. If $w \in \ker \bdy'$, we have $\bdy f_1 w = f_0 \bdy' w = 0$. Therefore, $f_1 (\ker \bdy') \subset \ker \bdy$. 

On the other hand, $f_1 = I_{\edgesE'} \otimes 1_{n \times 1}$, so it is clear that $\operatorname{rank} f_1 = \edgesE'$. It follows that $\dim \ker f_1 = 0$, and that $\dim f_1(\ker \bdy') = \dim \ker \bdy'$.  However, $\dim \ker \bdy' = \Loops(\graphG') = \Loops(\graphG) = \dim \ker \bdy$. Therefore $f_1 (\ker \bdy') = \ker \bdy$. 

Notice that, since $I - (\bdy')^+ (\bdy')$ is the orthogonal projector onto $\ker \bdy'$, we have shown $f_1 (\ker \bdy') = \im f_1 (I - (\bdy')^+(\bdy')) = \ker \bdy$. Since $I - \bdy^+ \bdy$ is the orthogonal projector onto $\ker \bdy$ it follows that
\begin{equation}
f_1 (I - (\bdy')^+(\bdy')) = (I - \bdy^+ \bdy) f_1 (I - (\bdy')^+(\bdy')). \label{eq:part1}
\end{equation}

We now claim that $f_1((\ker \bdy')^\perp) \subset (\ker \bdy)^\perp$. Suppose that $w_1' \in (\ker \bdy')^\perp$, and $w_2 \in \ker \bdy$. Since $f_1 (\ker \bdy') = \ker \bdy$, we know $w_2 = f_1 w_2'$ for some $w_2' \in \ker \bdy$. We compute
\[
\langle f_1 w_1', w_2 \rangle = \langle f_1 w_1', f_1 w_2' \rangle = \langle w_1', (f_1^T f_1) w_2' \rangle = \langle w_1', n w_2' \rangle = n \langle w_1', w_2' \rangle = 0,
\]
where we used~\autoref{lem:chain map properties} in the third equality. Since $w_2 \in \ker \bdy$ was arbitrary, this proves that $f_1 w_1' \in (\ker \bdy)^\perp$, so we have proved that $f_1((\ker \bdy')^\perp) \subset (\ker \bdy)^\perp$. 

Now $(\bdy')^+(\bdy')$ is the orthogonal projector onto $(\ker \bdy')^\perp$, so $f_1((\ker \bdy')^\perp) = \im f_1 (\bdy')^+ (\bdy') \subset (\ker \bdy)^\perp$. Since $\bdy^+ \bdy$ is the orthogonal projector onto $(\ker \bdy)^\perp$, it follows that 
\begin{equation} 
f_1 (\bdy')^+ (\bdy') = \bdy^+ \bdy f_1 (\bdy')^+ (\bdy'). \label{eq:part2}
\end{equation}
Expanding the right hand side of~\eqref{eq:part1} and using~\eqref{eq:part2}, we have
\[
\begin{aligned}
f_1 (I - (\bdy')^+(\bdy')) &= (I - \bdy^+ \bdy) f_1 (I - (\bdy')^+(\bdy')) \\
 &= 
 f_1 - \bdy^+ \bdy f_1 - f_1 (\bdy')^+ (\bdy') + \bdy^+ \bdy f_1 (\bdy')^+ (\bdy') \\
 &= f_1 - \bdy^+ \bdy f_1. 
\end{aligned}
\]
It follows that $f_1 (\bdy')^+ (\bdy') = \bdy^+ \bdy f_1$ or equivalently that $(\bdy')^T (\bdy')^{T+} f_1^T = f_1^T \bdy^T \bdy^{T+}$. But this completes the proof, because now
\begin{multline*}
(f_1^T \otimes I_d) P = (f_1^T \otimes I_d)(\bdy^{T} \bdy^{T+} \otimes I_d) 
= f_1^T \bdy^{T} \bdy^{T+} \otimes I_d = (\bdy')^T (\bdy')^{T+} f_1^T \otimes I_d \\
= ((\bdy')^T (\bdy')^{T+} \otimes I_d)(f_1^T \otimes I_d) = P' (f_1^T \otimes I_d),
\end{multline*}
as required.
\end{proof}

We're now ready to prove~\autoref{prop:embeddings of structure graph}.
\begin{proof}[{Proof of \autoref{prop:embeddings of structure graph}}]
From~\autoref{prop:bdy facts}, we know that $\ker (\bdy')^T$ is spanned by $1_{\verticesV'}$. Orthogonal projection onto this space is therefore given by $\frac{1_{\verticesV' \times \verticesV'}}{\verticesV'}$ and so the zero-mean matrix $Z_{\verticesV'}$ is orthogonal projection onto $(\ker (\bdy')^T)^\perp = \im \bdy'$. It follows that
\[
Z_{\verticesV'} = (\bdy') (\bdy')^{+} = (\bdy')^{T+} (\bdy')^T \implies Z_{\verticesV'} \otimes I_d = ((\bdy')^T \otimes I_d)^+ ((\bdy')^T \otimes I_d).
\] 
Further, from~\autoref{lem:chain map properties}, we have 
\[
\bdy f_1 = f_0 \bdy' \implies f_1^T \bdy^T = (\bdy')^T f_0^T \implies (f_1^T \otimes I_d)(\bdy^T \otimes I_d) = ((\bdy')^T \otimes I_d)(f_0^T \otimes I_d).
\]
We are now ready to compute. For convenience, let $\Vec W' = (f_1^T \otimes I_d) P \Vec W$. We have
\[
\begin{aligned}
(Z_{\verticesV'} \otimes I_d) \Vec X' &= (Z_{\verticesV'} \otimes I_d) (f_0^T \otimes I_d) \Vec X \\
&=  ((\bdy')^T \otimes I_d)^+ ((\bdy')^T \otimes I_d) (f_0^T \otimes I_d) \Vec X \\
&= ((\bdy')^T \otimes I_d)^+  (f_1^T \otimes I_d)(\bdy^T \otimes I_d) \Vec X \\
&= ((\bdy')^T \otimes I_d)^+  (f_1^T \otimes I_d) (\bdy^T \otimes I_d)(\bdy^T \otimes I_d)^+ \Vec W \\
&= ((\bdy')^T \otimes I_d)^+  (f_1^T \otimes I_d) P \Vec W \\
&= ((\bdy')^T \otimes I_d)^+ \Vec W'.
\end{aligned}
\]
We have proved that $(Z_{\verticesV'} \otimes I_d) \Vec X'$ is the pushforward of $\Vec W'$ under the map $((\bdy')^T \otimes I_d)^+$. We now need to investigate the distribution of $W'$. We know that $\Vec W \dist \mathcal{N}(0,\sigma^2 P)$. Thus $\Vec W'$ is also normal, and also has mean zero. The covariance matrix of $\Vec W'$ is given by 
\[
\begin{aligned}
(f_1^T \otimes I_d) P (\sigma^2 P) ((f_1^T \otimes I_d) P)^T = \sigma^2 (f_1^T \otimes I_d) P (f_1 \otimes I_d)
= \sigma^2 P' (f_1^T f_1 \otimes I_d) = n \sigma^2 P'. 
\end{aligned}
\]
where we used the fact that $P^T = P$ and $P^2 = P$ in the first equality,~\autoref{lem:homotopy equivalent} in the second equality and~\autoref{lem:chain map properties} in the third. Thus $\Vec W' \dist \mathcal{N}(0,n \sigma^2 P')$, as required.
 
\end{proof}

Suppose that we associate to each $\edge'_i \in \graphG'$ the corresponding set $\mathcal{E}_i$ of $n$ edges in $\graphG$ created by subdividing $\edge'_i$, and let $\mathcal{S}$ be the group of $\edgesE \times \edgesE$ permutation matrices which preserve all of the $\mathcal{E}_i$. Any $S \in \mathcal{S}$ defines a map $S \otimes I_d$ from $C^1(\graphG)$ to itself, and a corresponding map from $C^0(\graphG)$ to itself; we denote this by $X \mapsto X^S$, where
\[
X^S = (\bdy^{T+} \otimes I_d) (S \otimes I_d) (\bdy^T \otimes I_d) X.
\]
\begin{lemma}
\label{lem:invariance}
If $X \dist \mathcal{N}(0,\sigma^2 L^+ \otimes I_d)$ with any edge variance $\sigma^2$ and $S \in \mathcal{S}$, then 
\[
X^S \dist  \mathcal{N}(0,\sigma^2 L^+ \otimes I_d).
\]
\end{lemma}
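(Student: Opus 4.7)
Writing $\Vec W \ceq (\bdy^T \otimes I_d) \Vec X$, the definition of $X^S$ becomes $\Vec{X^S} = (\bdy^{T+} \otimes I_d)(S \otimes I_d) \Vec W$. From \autoref{defn:phantom network distribution}, $\Vec W \sim \mathcal{N}(0, \sigma^2 P)$, so $\Vec{X^S}$ is a centered Gaussian and the task reduces to computing its covariance. Tracking the covariance through the linear maps, and mimicking the computation in \autoref{defn:phantom network distribution}, the goal collapses to showing the single identity
\[
(S \otimes I_d)\, P\, (S^T \otimes I_d) = P,
\]
or equivalently (using $P = \bdy^T \bdy^{T+} \otimes I_d$ and $S^T = S^{-1}$) that $S$ commutes with the orthogonal projector $\bdy^T \bdy^{T+}$ onto $\im \bdy^T \subset \R^{\edgesE}$.

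Since $S$ is orthogonal, this commutation is equivalent to $S$ preserving the subspace $\ker \bdy = (\im \bdy^T)^\perp$. The key structural input comes from the proof of \autoref{lem:homotopy equivalent}, where it was established that $\ker \bdy = f_1(\ker \bdy')$. With the identification $f_1 = I_{\edgesE'} \otimes 1_{n \times 1}$ from \autoref{lem:chain map properties}, an element $f_1 c \in \ker \bdy$ is precisely a vector whose value on each edge of $\mathcal{E}_i$ is the single constant $c_i$. Any $S \in \mathcal{S}$ permutes the edges within each $\mathcal{E}_i$, and such permutations obviously fix every vector that is constant on each block $\mathcal{E}_i$. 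Therefore $S$ restricts to the identity on $\ker \bdy$, and in particular $S(\ker \bdy) = \ker \bdy$, so $S$ commutes with the projector onto $\im \bdy^T$.

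With the commutation in hand, the covariance of $\Vec{X^S}$ is
\[
(\bdy^{T+} \otimes I_d)(S \otimes I_d)(\sigma^2 P)(S^T \otimes I_d)(\bdy^+ \otimes I_d) = \sigma^2 (\bdy^{T+} \otimes I_d)\, P\, (\bdy^+ \otimes I_d) = \sigma^2 L^+ \otimes I_d,
\]
where the last equality is exactly the telescoping computation already performed in \autoref{defn:phantom network distribution}. This gives $X^S \sim \mathcal{N}(0, \sigma^2 L^+ \otimes I_d)$ as desired.

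The only conceptual obstacle is the identification of cycles in $\graphG$ with vectors that are constant on each $\mathcal{E}_i$; the rest is a bookkeeping exercise with Kronecker products and pseudoinverses. Fortunately this identification is already available from the chain-map machinery developed for \autoref{lem:homotopy equivalent}, so the proof should be short.
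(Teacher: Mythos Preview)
Your argument is correct and follows the same overall strategy as the paper: reduce to showing that $S$ commutes with the projector $P$ (equivalently, that $S$ preserves $\im \bdy^T$), and then collapse the covariance computation. The organizational difference---starting from $\Vec W \sim \mathcal{N}(0,\sigma^2 P)$ rather than from the covariance of $X$---is cosmetic.

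The one substantive difference is how the commutation is established. The paper simply cites an earlier paper of the authors for the fact that $S \otimes I_d$ fixes $\im(\bdy^T \otimes I_d)$, and then uses orthogonality of $S$ to deduce commutation with $P$. You instead supply a direct argument: using $\ker \bdy = f_1(\ker \bdy')$ from the proof of \autoref{lem:homotopy equivalent} together with the block form $f_1 = I_{\edgesE'} \otimes 1_{n\times 1}$, you observe that every cycle in $\graphG$ is constant on each block $\mathcal{E}_i$ and is therefore fixed \emph{pointwise} by any $S \in \mathcal{S}$. This is slightly stronger than what the paper invokes (you get $S|_{\ker\bdy} = \id$, not just $S(\ker\bdy)=\ker\bdy$), and it makes the proof self-contained within the present paper rather than appealing to an external reference. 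The cost is that you rely on an intermediate claim buried inside the proof of \autoref{lem:homotopy equivalent} rather than on a clean statement; if you want to make the argument fully airtight, it would be worth isolating ``$\ker\bdy = f_1(\ker\bdy')$'' as its own lemma.
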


\begin{proof}
Fix any $S \in \mathcal{S}$. If $X \dist \mathcal{N}(0,\sigma^2 L^+ \otimes I_d)$, we know that $(\bdy^{T+} \otimes I_d) (S \otimes I_d) (\bdy^T \otimes I_d) X$ is also normally distributed with mean $0$ and covariance matrix given by the matrix product
\begin{equation}
(\bdy^{T+} \otimes I_d) (S \otimes I_d) (\bdy^T \otimes I_d)(\sigma^2 L^+ \otimes I_d) ((\bdy^{T+} \otimes I_d) (S \otimes I_d) (\bdy^T \otimes I_d))^T. \label{eq:covariance} 
\end{equation}
We start by considering the innermost terms
\[
\begin{aligned}
(\bdy^T \otimes I_d)(\sigma^2 L^+ \otimes I_d)(\bdy \otimes I_d) &= 
\sigma^2 (\bdy^T L^+ \bdy) \otimes I_d \\
&= \sigma^2 (\bdy^T \bdy^{T+} \bdy^+ \bdy) \otimes I_d
\end{aligned}
\]
Now $\bdy^+ \bdy = (\bdy^+ \bdy)^T = \bdy^T \bdy^{T+}$ by the Moore-Penrose properties. So we may rewrite this as 
\[
\begin{aligned}
\sigma^2 (\bdy^T \bdy^{T+} \bdy^+ \bdy) \otimes I_d &= \sigma^2 (\bdy^T \bdy^{T+} \otimes I_d)(\bdy^T \bdy^{T+} \otimes I_d) \\
&= \sigma^2PP = \sigma^2 P
\end{aligned}
\]
We proved in~\cite{CantarellaSchumacherShonkwiler2024b} that $S \otimes I_d$ fixes $\im (\bdy^T \otimes I_d)$. Since $S \otimes I_d$ is a permutation matrix, it is an orthogonal matrix: this means it also fixes $\ker (\bdy \otimes I_d)^\perp$. In turn, this means that $S \otimes I_d$ commutes with $P$, which is orthogonal projection to $\im (\bdy^T \otimes I_d)$. So we may return to~\eqref{eq:covariance} and observe that it's now equal to 
\[
\begin{aligned}
(\bdy^{T+} S (\sigma^2 P) S^T \bdy^{+}) \otimes I_d &= \sigma^2 (\bdy^{T+} P S S^T \bdy^+) \otimes I_d \\
&= \sigma^2 (\bdy^{T+} P \bdy^+) \otimes I_d \\
&= \sigma^2 (\bdy^{T+} \bdy^T \bdy^{T+} \bdy^+) \otimes I_d \\
&= \sigma^2 (\bdy^{T+} \bdy^+) \otimes I_d = \sigma^2 L^+ \otimes I_d,
\end{aligned}
\]
as desired, using the Moore--Penrose properties again to go from the third to fourth lines. 
\end{proof}

\section{Weighted Radius of Gyration and Weighted Graph Laplacians}

We start by recalling the definition of the weighted radius of gyration:

\begin{definition}
\label{defn:radius of gyration}
If $X = (x_1, \dotsc, x_{\verticesV})$ is a $d \times \verticesV$ matrix whose columns are points in $\R^d$ and $\Omega = (\omega_1, \dotsc, \omega_\verticesV) \in \R^{\verticesV}_+$ is a corresponding collection of positive scalars, we define the~\emph{total weight} $\abs{\varOmega} = \sum \omega_i$, the~\emph{center of mass} or~\emph{expectation} $\mu(X,\varOmega) = \frac{1}{\abs{\varOmega}} \sum_i \omega_i \, x_i$, and the~\emph{weighted radius of gyration} or~\emph{variance} to be $\Rog(X,\varOmega)$, given by
\[
\Rog(X,\varOmega)\ceq\frac{1}{2\abs{\varOmega}^2} \sum_{i=1}^{\verticesV} \sum_{j=1}^{\verticesV} \omega_i \, \omega_j \norm{x_i - x_j}^2 .
\]
\end{definition}

It is well-known that the expectation of the (unweighted) radius of gyration of a graph embedding distributed as a Gaussian topological polymer is given in terms of the trace of the pseudoinverse of the (unweighted) graph Laplacian~\cite[eq.\,18a]{Eichinger1980}, which is also known as the Kirchhoff index. The statement has been generalized to the degree-weighted radius of gyration and degree-weighted (or normalized) graph Laplacian~\cite[p.\,1694]{Chen:2010da} as well. In fact, the analogous statement holds for arbitrary weights, as we now intend to show. 

It is a classical observation that the unweighted radius of gyration is a quadratic form evaluated on vertex positions~(cf.~\cite{Zimm1949}). We now derive the corresponding formula for~\emph{weighted} radius of gyration.

\begin{lemma}
\label{lem:radius of gyration form}
If $X = (x_1, \dotsc, x_{\verticesV})$ is a $d \times \verticesV$ matrix whose columns are points in $\R^d$ and $\Omega = (\omega_1, \dotsc, \omega_\verticesV) \in \R^{\verticesV}_+$ is a corresponding collection of positive scalars, let $D = \diag(\omega_1, \dotsc, \omega_{\verticesV})$. Then the weighted radius of gyration is the quadratic form 
\[
\Rog(X,\Omega) = (\operatorname{vec} X)^T (Q \otimes I_d) (\operatorname{vec} X), \text{ where } Q = \frac{1}{\tr D} \left( D- \frac{(D 1_{\verticesV})(D 1_{\verticesV})^T}{\tr D} \right).
\]
\end{lemma}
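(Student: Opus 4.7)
The plan is to verify the claimed identity by expanding the double sum in the definition, recognizing the resulting expression as a trace, and applying the vec trick. I expect no real obstacle here; the calculation is essentially algebraic bookkeeping, and the only thing to be careful about is confirming that the symmetric matrix $Q$ really does package both terms correctly.

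First I would expand $\norm{x_i - x_j}^2 = \norm{x_i}^2 - 2\, x_i \cdot x_j + \norm{x_j}^2$ in the definition of $\Rog(X,\Omega)$ and use the symmetry of the double sum in $i,j$. Splitting off the diagonal-type terms and the cross terms yields
\[
\sum_{i,j} \omega_i \omega_j \norm{x_i - x_j}^2 = 2 \abs{\Omega} \sum_i \omega_i \norm{x_i}^2 - 2 \left\lVert \sum_i \omega_i x_i \right\rVert^2,
\]
after recognizing $\sum_j \omega_j = \abs{\Omega}$ and $\sum_{i,j} \omega_i \omega_j (x_i \cdot x_j) = \lVert \sum_i \omega_i x_i \rVert^2$. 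Dividing by $2\abs{\Omega}^2$ gives the simplified scalar form
\[
\Rog(X,\Omega) = \frac{1}{\abs{\Omega}} \sum_i \omega_i \norm{x_i}^2 - \frac{1}{\abs{\Omega}^2} \left\lVert \sum_i \omega_i x_i \right\rVert^2.
\]

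Next I would rewrite each piece in matrix form. Since the $x_i$ are the columns of $X$ and $D = \diag(\omega_1,\dotsc,\omega_\verticesV)$, we have $X D X^T = \sum_i \omega_i x_i x_i^T$, whose trace is $\sum_i \omega_i \norm{x_i}^2$. Similarly, $X(D 1_\verticesV) = \sum_i \omega_i x_i$, so $X(D 1_\verticesV)(D 1_\verticesV)^T X^T$ has trace $\lVert \sum_i \omega_i x_i \rVert^2$. Noting $\abs{\Omega} = \tr D$, these observations combine to give
\[
\Rog(X,\Omega) = \tr\!\left( X \cdot \frac{1}{\tr D}\!\left( D - \frac{(D 1_\verticesV)(D 1_\verticesV)^T}{\tr D}\right) X^T \right) = \tr(X Q X^T).
\]

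Finally, because $Q$ is manifestly symmetric, the vec trick $(A \otimes B)\Vec C = \Vec(BCA^T)$ with $A = Q^T = Q$ and $B = I_d$ gives $(Q \otimes I_d)\Vec X = \Vec(X Q)$, so
\[
(\Vec X)^T (Q \otimes I_d)(\Vec X) = (\Vec X)^T \Vec(XQ) = \tr(X^T X Q) = \tr(X Q X^T),
\]
which matches the expression derived above. The only subtlety worth flagging in the write-up is keeping the transpose conventions straight in the vec trick; once $Q = Q^T$ is noted this is automatic.
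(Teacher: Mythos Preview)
Your proof is correct and follows essentially the same route as the paper's: expand the double sum, separate the $\sum_i \omega_i \|x_i\|^2$ term from the $\|\sum_i \omega_i x_i\|^2$ term, and identify the resulting quadratic form with coefficient matrix $Q\otimes I_d$. The only cosmetic difference is that you pass through the trace representation $\tr(XQX^T)$ and then invoke the vec trick, whereas the paper reads off the coefficient matrix directly from the expanded sum.
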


\begin{proof}
This follows mechanically from expanding the formula in~\autoref{defn:radius of gyration}:
\[
\begin{aligned}
\sum_{i=1}^n \sum_{j=1}^n\omega_i \, \omega_j \norm{x_i - x_j}^2 &= \sum_{i=1}^{\verticesV} \sum_{j=1}^{\verticesV} \sum_{k=1}^{d} \omega_i \, \omega_j (x_{ki}^2 + x_{kj}^2 - 2 x_{ki} x_{kj}) \\
&= 2 \left(\sum_{i=1}^{\verticesV} \omega_i\right) \sum_{j=1}^{\verticesV} \sum_{k=1}^{d} \omega_j x_{kj}^2 - 
2 \sum_{k=1}^d \sum_{i=1}^{\verticesV} \sum_{j=1}^{\verticesV} \omega_i \omega_j x_{ki} x_{kj}.
\end{aligned}
\]
This expression is quadratic in the entries of $X$, so it can be written as $(\operatorname{vec} X)^T C (\operatorname{vec} X)$ for~\emph{some} $d \verticesV \times d \verticesV$ matrix of coefficients $C$. To complete the proof, we compute $C$. The first sum contributes the diagonal matrix $D \otimes I_d$ to $C$. Observing that the  
$\verticesV \times \verticesV$ matrix whose $ij$-th entry is $\omega_i \omega_j$ is given explicitly by the rank-1 matrix $(D 1_{\verticesV}) (D 1_\verticesV)^T$, we see the second sum contributes $(D 1_{\verticesV}) (D 1_\verticesV)^T \otimes I_d$ to $C$. Using $\abs{\Omega} = \sum_{i=1}^{\verticesV} \omega_i = \tr D$ and dividing through by $2 (\tr D)^2$ completes the proof.
\end{proof}

We can now give our generalization of the relationship between radius of gyration and Kirchhoff index:

\begin{proposition}
If $X$ is distributed as a Gaussian topological polymer (for $\graphG$) in $\R^d$ with edge variance $\sigma^2$, then the expected value of the weighted radius of gyration $\langle \Rog(X,\Omega) \rangle$ for any set of weights $\omega_i>0$ on the vertices of $\graphG$ is given by
\[
\langle \Rog(X,\Omega) \rangle = \frac{d \sigma^2}{\omega} \tr \mathcal{L}_\Omega^+(\graphG)
\]
where $\mathcal{L}_{\Omega}(\graphG)$ is given in terms of the ordinary graph Laplacian $L(\graphG)$ and the diagonal matrix of weights $D = \operatorname{diag}(\omega_1, \dots, \omega_{\verticesV})$ by $\mathcal{L}_{\Omega}(\graphG) = D^{-1/2} L(\graphG) D^{-1/2}$, and $\omega$ is the total weight $\omega = \sum_i \omega_i$.
\label{prop:weighted rog and weighted laplacian}
\end{proposition}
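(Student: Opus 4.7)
The plan is as follows. \autoref{lem:radius of gyration form} already expresses $\Rog(X,\Omega)$ as the quadratic form $(\Vec X)^T(Q \otimes I_d)(\Vec X)$ with $\omega Q = D - \tfrac{1}{\omega}(D 1_\verticesV)(D 1_\verticesV)^T$, and \autoref{defn:phantom network distribution} gives $\Vec X \dist \mathcal{N}(0, \sigma^2 L^+ \otimes I_d)$. Combining these with the standard formula $\langle Y^T M Y\rangle = \tr(M\Sigma)$ for the expectation of a quadratic form in a centered Gaussian with covariance $\Sigma$ and the identity $\tr(A \otimes B) = \tr(A)\tr(B)$ yields
\[
\langle \Rog(X,\Omega)\rangle = \sigma^2 \tr\bigl((Q \otimes I_d)(L^+ \otimes I_d)\bigr) = d\sigma^2 \tr(QL^+),
\]
so it suffices to prove the matrix identity $\omega\,\tr(QL^+) = \tr\mathcal{L}_\Omega^+$.

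The next step I would take is to rewrite $\omega Q$ symmetrically: a short calculation shows $\omega Q = D^{1/2}\tilde P D^{1/2}$, where $\tilde P \ceq I - \tfrac{1}{\omega} D^{1/2} 1_\verticesV 1_\verticesV^T D^{1/2}$. Since $\lVert D^{1/2} 1_\verticesV\rVert^2 = 1_\verticesV^T D 1_\verticesV = \omega$, this $\tilde P$ is precisely the orthogonal projection onto $(D^{1/2} 1_\verticesV)^\perp$. Using the cyclic property of trace and idempotency $\tilde P^2 = \tilde P$,
\[
\omega\,\tr(QL^+) = \tr(\tilde P\, D^{1/2} L^+ D^{1/2}) = \tr(\tilde P\, D^{1/2} L^+ D^{1/2}\,\tilde P),
\]
so the problem reduces to proving the symmetric matrix identity
\[
\mathcal{L}_\Omega^+ = \tilde P\, D^{1/2} L^+ D^{1/2}\,\tilde P.
\]

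To establish this identity I plan to verify the Moore--Penrose characterization. By \autoref{prop:bdy facts}, $\ker L = \Span(1_\verticesV)$, so $\ker \mathcal{L}_\Omega = D^{1/2}(\ker L) = \Span(D^{1/2} 1_\verticesV)$, and $\tilde P$ is therefore the orthogonal projector onto $\im(\mathcal{L}_\Omega)$. The right-hand side is visibly symmetric and its image lies in $\im(\tilde P) = \im(\mathcal{L}_\Omega)$. A direct calculation using $LL^+ = I - \tfrac{1}{\verticesV} 1_\verticesV 1_\verticesV^T$, $L 1_\verticesV = 0$, and the simplification $D^{-1/2}\tilde P D^{1/2} = I - \tfrac{1}{\omega} 1_\verticesV 1_\verticesV^T D$ then shows $\mathcal{L}_\Omega \bigl(\tilde P D^{1/2} L^+ D^{1/2} \tilde P\bigr) = \tilde P$, which together with symmetry and range containment closes out the characterization.

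The main obstacle will be this last pseudoinverse identity. Conjugation by $D^{-1/2}$ does not commute with pseudoinversion in a naive way, because the kernels of $L$ and $\mathcal{L}_\Omega$ point in different directions ($\Span(1_\verticesV)$ versus $\Span(D^{1/2} 1_\verticesV)$); the flanking projectors $\tilde P$ are exactly what is needed to compensate for this mismatch. Once the identity is in hand, everything else is routine bookkeeping with Kronecker products and traces.
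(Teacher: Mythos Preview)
Your proposal is correct and follows essentially the same route as the paper: reduce to $\langle \Rog(X,\Omega)\rangle = d\sigma^2 \tr(QL^+)$ via the quadratic-form lemma and the Gaussian covariance, factor $\omega Q = D^{1/2}\tilde P D^{1/2}$ with $\tilde P$ the orthogonal projector onto $(D^{1/2}1_\verticesV)^\perp$, and then use the identity $\mathcal{L}_\Omega^+ = \tilde P\,D^{1/2}L^+D^{1/2}\,\tilde P$ together with cyclic invariance of trace. The only difference is that the paper invokes Theorem~1 of Bozzo for this pseudoinverse identity, whereas you verify it directly from the Moore--Penrose axioms (your computation $\mathcal{L}_\Omega M = \tilde P$ goes through exactly as you outline, since $L\,D^{-1/2}\tilde P D^{1/2} = L$ and $1_\verticesV^T D^{1/2}\tilde P = 0$); your version is thus self-contained but otherwise identical in structure.
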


\begin{proof}
From~\autoref{defn:phantom network distribution}, we know $\Vec X$ has mean $0$ and covariance matrix $L^+ \otimes I_d$. Since~$\Rog(X,\Omega)$ is a quadratic form in $X$ with coefficient matrix $Q \otimes I_d$ it follows that the expectation 
\[
\langle \Rog(X,\Omega) \rangle = \tr((Q \otimes I_d)(\sigma^2 L^+ \otimes I_d)) = d\sigma^2 \tr (Q L^+).
\]
Now the matrix $Q$ can be rewritten
\[
Q = \frac{1}{\tr D} \left( D- \frac{(D 1_{\verticesV})(D 1_{\verticesV})^T}{\tr D} \right) = 
\frac{1}{\tr D} D^{1/2} \left( I - \frac{(D^{1/2} 1_{\verticesV}) (D^{1/2} 1_{\verticesV})^T}{\tr D} \right) D^{1/2}.
\]
Since $\tr D = \sum_{i=1}^{\verticesV} \omega_i = \sum_{i=1}^{\verticesV} (\sqrt{\omega_i})^2$, the inner matrix can be rewritten
\begin{equation}
I - \frac{(D^{1/2} 1_{\verticesV}) (D^{1/2} 1_{\verticesV})^T}{\tr D} = I - \frac{(D^{1/2} 1_{\verticesV}) (D^{1/2} 1_{\verticesV})^T}{\norm{D^{1/2} 1_{\verticesV}}^2},
\label{eq:orthoprojector}
\end{equation}
which makes it clear that this matrix is the orthogonal projector onto $(\Span D^{1/2} 1_{\verticesV})^\perp$. On the other hand, the graph Laplacian $L$ of the connected graph $\graphG$ has $\ker L = \Span(1_{\verticesV})$, so $\ker D^{-1/2} L D^{-1/2} = \Span (D^{1/2} 1_{\verticesV})$. Theorem~1 in~\cite{Bozzo} then states that 
\[
\mathcal{L}_\Omega^+ = \left( I - \frac{(D^{1/2} 1_{\verticesV}) (D^{1/2} 1_{\verticesV})^T}{\norm{D^{1/2} 1_{\verticesV}}^2} \right) D^{1/2} L^+ D^{1/2} \left( I - \frac{(D^{1/2} 1_{\verticesV}) (D^{1/2} 1_{\verticesV})^T}{\norm{D^{1/2} 1_{\verticesV}}^2} \right).
\]
Using the cyclic invariance of trace (and the fact that every projector is equal to its square) yields $\tr \mathcal{L}_{\Omega}^+ = \tr D \tr (QL^+)$, which completes the proof. 
\end{proof}

\section{The expected radius of gyration of subdivision graphs}

We are now ready to prove our main theorem, which we restate for convenience:

\begin{theorem}
Suppose $X$ is distributed as a Gaussian topological polymer (for $\graphG$) with edge variance $\sigma^2$, and that $\graphG$ is a subdivision of $\graphG'$, where each edge in $\graphG'$ has been subdivided into $n > 1$ pieces. Let $\Omega' = (\deg \vertex'_1 + \nicefrac{2}{n-1}, \dotsc, \deg \vertex'_{\verticesV'} + \nicefrac{2}{n-1})$, $\mathcal{L}_{\Omega'}(\graphG')$ be the weighted graph Laplacian of $\graphG'$, and $\operatorname{Loops}(\graphG') = \edgesE' - \verticesV' + 1$ be the cycle rank of $\graphG'$ (which is equal to the cycle rank of $\graphG$).

Then the expected (squared) radius of gyration
\begin{equation}
\left< \Rog(X) \right> = d \sigma^2 \frac{n^2-1}{2 \verticesV} \left( \frac{n}{n+1} \tr \mathcal{L}^+_{\Omega'}(\graphG') +  \frac{1}{3}\left(1 - \frac{1}{2\verticesV}\right) \operatorname{Loops}(\graphG') - \frac{1}{6}\left( 1 - \frac{1}{\verticesV} \right)\right). \label{eq:expected rog}
\end{equation}
The contraction factor of $\graphG$ is the quotient of this quantity by the corresponding expected (squared) radius of gyration for a linear polymer with the same number of monomers and the same edge variance $\sigma^2$. This is given by
\begin{equation}
g(\graphG) = \frac{3(n^2-1)}{\verticesV^2-1} \left( \frac{n}{n+1} \tr \mathcal{L}^+_{\Omega'}(\graphG') +  \frac{1}{3}\left(1 - \frac{1}{2\verticesV}\right) \operatorname{Loops}(\graphG') - \frac{1}{6}\left( 1 - \frac{1}{\verticesV} \right)\right). \label{eq:contraction factor}
\end{equation}
\end{theorem}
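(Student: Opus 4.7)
The plan is to apply the law of total expectation, conditioning on the junction positions $X' \ceq X f_0$. Since $\Rog$ is translation-invariant, we may pin $x_{v_1} = 0$ to remove the $X 1_\verticesV = 0$ constraint, giving a proper Gaussian density proportional to $\exp(-\frac{1}{2\sigma^2}\sum_e \norm{\Delta_e}^2)$ that factors across the subdivided edges of $\graphG$. Consequently, given $X'$, the interior positions on each subdivided edge of $\graphG$ are independent, and on a single edge with junction endpoints $y_0, y_n$ the $n-1$ interior vertices form a discrete Brownian bridge: conditional mean $\frac{(n-\ell) y_0 + \ell y_n}{n}$ at position $\ell$ and conditional covariance $\sigma^2 K' \otimes I_d$, where $K'_{\ell m} = \min(\ell,m)(n-\max(\ell,m))/n$. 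Writing $\bar X$ for the matrix of conditional means and applying the law of total expectation to the quadratic form from \autoref{lem:radius of gyration form} with $\Omega = (1,\dotsc,1)$, the target splits as
\[
\langle \Rog(X) \rangle = \langle \Rog(\bar X) \rangle + d\sigma^2 \tr(Q K),
\]
where $Q = \frac{1}{\verticesV}(I - \frac{1}{\verticesV} 1_{\verticesV \times \verticesV})$ and $K$ is block-diagonal with $\edgesE'$ copies of $K'$ (supported on interior vertices only).

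The key identification is that $\Rog(\bar X)$ is essentially a weighted radius of gyration of $X'$ with weights $\Omega' = (\deg v' + \nicefrac{2}{n-1})$. A direct calculation of the center of mass shows $\mu(\bar X) = \mu(X', \Omega')$: the $n-1$ interior vertices on each edge $e' = (a,b)$ contribute $\frac{n-1}{2}(x_a + x_b)$ to $\sum_i \bar x_i$, so each junction $v$ receives an extra weight $\frac{n-1}{2}\deg v'$ on top of its own unit weight, and $\abs{\Omega'} = 2\edgesE' + \frac{2\verticesV'}{n-1} = \frac{2\verticesV}{n-1}$ makes the normalizations agree. Expanding $\sum_{\ell=1}^{n-1}\norm{\frac{(n-\ell) \tilde a + \ell \tilde b}{n}}^2$ as a quadratic form in $\tilde a = x_a - \mu,\ \tilde b = x_b - \mu$ and summing over edges of $\graphG'$ then yields
\[
\Rog(\bar X) = \Rog(X', \Omega') - \frac{n^2 - 1}{6 n \verticesV} \sum_{e' \in \graphG'} \norm{x_a - x_b}^2.
\]
By \autoref{prop:embeddings of structure graph}, $Z_{\verticesV'} X'$ is a Gaussian topological polymer on $\graphG'$ with edge variance $n\sigma^2$; since both $\Rog(X', \Omega')$ and $\sum_{e'}\norm{x_a-x_b}^2$ are translation-invariant in the columns of $X'$, \autoref{prop:weighted rog and weighted laplacian} applied to this distribution gives $\langle \Rog(X', \Omega')\rangle = \frac{dn\sigma^2(n-1)}{2\verticesV}\tr \mathcal{L}^+_{\Omega'}(\graphG')$, and the identity $\sum_{e'}(e_a - e_b)^T (L')^+ (e_a - e_b) = \tr((\bdy')^T (\bdy')^{T+}) = \verticesV' - 1$ gives $\langle \sum_{e'}\norm{x_a-x_b}^2\rangle = dn\sigma^2(\verticesV' - 1)$.

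Finally, the variance term reduces to $d\sigma^2\tr(QK) = \frac{d\sigma^2}{\verticesV}\tr K - \frac{d\sigma^2}{\verticesV^2} 1_\verticesV^T K 1_\verticesV$, and the two elementary per-edge sums $\tr K' = \frac{n^2-1}{6}$ and $1_{n-1}^T K' 1_{n-1} = \frac{n(n^2-1)}{12}$ evaluate it in closed form. Combining all three contributions, writing $\verticesV' - 1 = \edgesE' - \Loops(\graphG')$, and simplifying using the identity $n\edgesE' = \verticesV + \Loops(\graphG') - 1$ (immediate from $\verticesV = \verticesV' + (n-1)\edgesE'$) produces the stated formula for $\langle\Rog(X)\rangle$. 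The contraction factor then follows by dividing by $\frac{d\sigma^2(\verticesV^2-1)}{6\verticesV}$, which is $\langle\Rog\rangle$ for a linear polymer on $\verticesV$ monomers (itself a special case of \autoref{prop:weighted rog and weighted laplacian} applied to the path graph). The main obstacle will be this final algebraic bookkeeping: verifying that the Brownian-bridge variance term and the edge-correction combine precisely so that the coefficients of $\Loops(\graphG')$ and $1/\verticesV$ produce the clean form stated.
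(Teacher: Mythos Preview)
Your approach is correct, and the final algebra does close via the substitution $n\edgesE' = \verticesV + \Loops(\graphG') - 1$ that you flag: the Brownian-bridge trace terms $\tr K' = \tfrac{n^2-1}{6}$, $1_{n-1}^T K' 1_{n-1} = \tfrac{n(n^2-1)}{12}$ and the edge-correction $-\tfrac{n^2-1}{6n\verticesV}\cdot dn\sigma^2(\verticesV'-1)$ combine to give precisely the stated coefficients of $\Loops(\graphG')$ and $1/\verticesV$. However, the paper takes a genuinely different route. Rather than conditioning on junction positions and using the Brownian-bridge structure, it invokes a deterministic symmetrization identity from a companion paper: for the group $\mathcal{S}$ of permutations that shuffle edges within each subdivided-edge group, and for \emph{every} fixed embedding $X$,
\[
\frac{1}{\#\mathcal{S}}\sum_{S \in \mathcal{S}} \Rog(X^S) = \Rog\Bigl(X', \deg + \tfrac{2}{n-1}\Bigr) + \frac{n+1}{12\verticesV^2}\Bigl((2\verticesV - n)\norm{W}^2 - (2\verticesV - 1)\norm{W'}^2\Bigr).
\]
Since $X^S$ and $X$ are identically distributed (\autoref{lem:invariance}), taking expectations reduces the problem to computing $\langle\Rog(X',\Omega')\rangle$, $\langle\norm{W}^2\rangle$, and $\langle\norm{W'}^2\rangle$, which are then handled by \autoref{prop:weighted rog and weighted laplacian} and \autoref{prop:embeddings of structure graph} exactly as in your argument. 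Your conditioning approach is more self-contained---it needs only the classical discrete-bridge covariance and nothing beyond the paper's own propositions---whereas the paper's symmetrization identity is a stronger pointwise statement (valid for each $X$, not merely in expectation) but comes at the cost of an external dependency. Structurally, your bias--variance split $\langle\Rog(\bar X)\rangle + d\sigma^2\tr(QK)$ plays the same role as their permutation average, and after the common substitution $n\edgesE' = \verticesV + \chi - 1$ the two routes yield term-by-term the same contributions.
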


Since $\verticesV = (n-1) \edgesE' + \verticesV'$, both formulas are given entirely in terms of the original graph $\graphG'$ and the number of subdivisions $n$. 

\begin{proof}
Our goal is to compute the expectation of $\Rog(X)$, knowing that $X \dist \mathcal{N}(0,\sigma^2 L^+ \otimes I_d)$. Fix any $S$ in the group of permutation matrices $\mathcal{S}$ which preserve the edge groups $\mathcal{E}_i$. We showed in~\autoref{lem:invariance} that $X^S \dist \mathcal{N}(0,\sigma^2 L^+ \otimes I_d)$ as well. It follows that $\left< \Rog(X) \right> = \left< \Rog(X^S) \right>$. In particular, we may sum over all $S \in \mathcal{S}$ and (using linearity of expectations) obtain
\begin{equation}\label{eq:iterated expectations}
\left< \Rog(X) \right> = \frac{1}{\# \mathcal{S}} \sum_{S \in \mathcal{S}} \left< \Rog(X^S) \right> 
= \left< \frac{1}{\# \mathcal{S}} \sum_{S \in \mathcal{S}} \Rog(X^S) \right>.
\end{equation}
We note that this is a very simple application of the Law of Iterated Expectations. It is helpful because we have previously proved a theorem about these permutation averages: 

\begin{theorem}[{\cite[Theorem~1]{CantarellaSchumacherShonkwiler2024b}}]
\label{thm:symmetrization formula}
Suppose $X$ is an embedding of $\graphG$. For each $S \in \mathcal{S}$, let $X^S$ be the embedding induced by applying $S$ to the edge displacement vectors. Then the average radius of gyration over the $\mathcal{S}$-orbit of $X$ is
\begin{equation}
\begin{aligned}
	\frac{1}{\# \mathcal{S}} \sum_{S \in \mathcal{S}} \Rog(X^S) 
	&= 
	\Rog \pars[\Big]{ X',\deg + \frac{2}{n-1} }
	+ 
	\frac{n+1}{12 \ \verticesV^2} \pars{
	\pars{2 \verticesV - n} \norm{W}^2 
	-
	\pars{2 \verticesV - 1} \norm{W'}^2} .
\end{aligned} \label{eq:symmetrization theorem}
\end{equation}
Here $\Rog \pars[\Big]{X',\deg + \frac{2}{n-1} }$ is the weighted radius of gyration where $\vertex_i$ has weight $\deg(v_i) + \frac{2}{n-1}$, $\norm{W}^2 = \sum_{i=1}^\edgesE \sum_{j=1}^{n} \norm{w_{i,j}}^2$, and $\norm{W'}^2 = \sum_{i=1}^{\edgesE'} \norm{w'_i}^2$.
\end{theorem}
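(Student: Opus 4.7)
The claim is a deterministic identity valid for any embedding $X$, so my plan is to perform the $\mathcal{S}$-average directly, decomposing each permuted embedding $X^S$ into a deterministic ``evenly-spaced'' part $\bar X$ (independent of $S$) plus a zero-mean stochastic fluctuation $\eta^S$. For each subdivided edge $\edge'_i$ with displacements $w_{i,1},\dotsc,w_{i,n}$ and $w'_i = \sum_k w_{i,k}$, introduce centered displacements $\tilde w_{i,k} \ceq w_{i,k} - \tfrac{1}{n}w'_i$ (which sum to zero). Then interior vertex $(i,j)$ under $X^S$ sits at $x'_{\tail(\edge'_i)} + \tfrac{j}{n}w'_i + \eta^S_{(i,j)}$ with $\eta^S_{(i,j)} = \sum_{k \le j}\tilde w_{i,\pi_i(k)}$; the first two summands define $\bar X$, and $\mathbb{E}_S[\eta^S_a] = 0$ for every vertex $a$. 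Expanding $\|X^S_a - X^S_b\|^2$, the cross terms vanish on averaging and I obtain
\[
\frac{1}{\#\mathcal{S}}\sum_{S\in\mathcal{S}}\Rog(X^S) \;=\; \Rog(\bar X) \;+\; \frac{1}{2\verticesV^2}\sum_{a,b}\mathbb{E}_S\bigl[\|\eta^S_a - \eta^S_b\|^2\bigr].
\]

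For the deterministic piece, I would write $\bar X = X'M$ with $M \in \Mat_{\verticesV'\times\verticesV}$ having column $e_k$ for junction vertex $\vertex'_k$ and column $(1-\tfrac{j}{n})e_{\tail(\edge'_i)} + \tfrac{j}{n}e_{\head(\edge'_i)}$ for interior vertex $(i,j)$. By \autoref{lem:radius of gyration form}, $\Rog(\bar X) = (\Vec X')^T(MQM^T \otimes I_d)(\Vec X')$ with $Q = \tfrac{1}{\verticesV}(I - \tfrac{1}{\verticesV}1_{\verticesV\times\verticesV})$. Using $\sum_{j=1}^{n-1}\tfrac{j(n-j)}{n^2} = \tfrac{n^2-1}{6n}$ and $\sum_{j=1}^{n-1}(\tfrac{j}{n})^2 = \tfrac{(n-1)(2n-1)}{6n}$, I would compute $M 1_{\verticesV} = \tfrac{n-1}{2}\omega'$ with $\omega' = \deg + \tfrac{2}{n-1}$ and then derive the clean identity
\[
MQM^T \;=\; Q'_{\Omega'} \;-\; \frac{n^2-1}{6n\verticesV}\,L(\graphG'),
\]
where $Q'_{\Omega'}$ is the weighted-rog coefficient matrix for $\graphG'$ with weights $\Omega'$. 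Since $(\Vec X')^T(L(\graphG')\otimes I_d)(\Vec X') = \|W'\|^2$, this yields $\Rog(\bar X) = \Rog(X',\Omega') - \tfrac{n^2-1}{6n\verticesV}\|W'\|^2$.

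For the stochastic piece, the key input is the partial-sum variance identity: for uniform $\pi \in S_n$ and any centered $\tilde v_1,\dotsc,\tilde v_n\in\R^d$,
\[
\mathbb{E}\Bigl\|{\textstyle\sum_{k\in A}}\tilde v_{\pi(k)}\Bigr\|^2 \;=\; \frac{|A|(n-|A|)}{n(n-1)}\sum_k\|\tilde v_k\|^2, \qquad A \subset \{1,\dotsc,n\},
\]
which follows from $\sum_k \tilde v_k = 0$ and direct computation of the pair-expectations under a uniform permutation. Applied with $T_i \ceq \sum_k\|w_{i,k}\|^2 - \tfrac{1}{n}\|w'_i\|^2$, this gives $\mathbb{E}\|\eta^S_{(i,j)}\|^2 = \tfrac{j(n-j)}{n(n-1)}T_i$ (and the analogous expression for same-group differences), while independence of the $\pi_i$ across edge groups makes all cross terms vanish, so $\mathbb{E}\|\eta^S_a - \eta^S_b\|^2 = \mathbb{E}\|\eta^S_a\|^2 + \mathbb{E}\|\eta^S_b\|^2$ for pairs on different groups. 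Classifying ordered vertex pairs as junction--interior, interior--interior on the same group, or interior--interior on different groups, applying $\sum_{j=1}^{n-1}j(n-j) = \tfrac{n(n^2-1)}{6}$, and using $2\verticesV' + 2(n-1)\edgesE' = 2\verticesV$, the sum collapses to $\tfrac{(n+1)(2\verticesV - n)}{3}(\|W\|^2 - \tfrac{1}{n}\|W'\|^2)$. Dividing by $2\verticesV^2$ and combining with the deterministic $-\tfrac{n^2-1}{6n\verticesV}\|W'\|^2$ correction, the $\|W'\|^2$ coefficients combine into exactly $-\tfrac{(n+1)(2\verticesV-1)}{12\verticesV^2}$, reproducing the claimed formula.

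The hard part will be verifying the matrix identity $MQM^T = Q'_{\Omega'} - \tfrac{n^2-1}{6n\verticesV}L(\graphG')$. The specific weights $\deg\vertex'_k + \tfrac{2}{n-1}$ are not a priori obvious: they are forced by requiring the $D^{\deg}$-term and the adjacency term in $MM^T$ to combine into a multiple of $L = D^{\deg} - A$, with the additive $\tfrac{2}{n-1}$ absorbing the identity contribution from the junction-vertex columns of $M$. Once this identity is in hand, the rest is careful bookkeeping of pair types and simplification.
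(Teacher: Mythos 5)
This theorem is quoted from \cite{CantarellaSchumacherShonkwiler2024b} and the present paper contains no proof of it, so there is no in-paper argument to compare against; I can only assess your proposal on its own terms. Having checked the key computations, your strategy is correct and complete in outline. The orthogonal decomposition $X^S = \bar X + \eta^S$ with $\mathbb{E}_S[\eta^S]=0$ does kill the cross terms; the partial-sum variance identity $\mathbb{E}\|\sum_{k\in A}\tilde v_{\pi(k)}\|^2 = \tfrac{|A|(n-|A|)}{n(n-1)}\sum_k\|\tilde v_k\|^2$ is the standard sampling-without-replacement computation and is right; and the bookkeeping closes: the three pair classes contribute $\tfrac{n+1}{6}\bigl(2\verticesV' + 2(n-1)(\edgesE'-1) + (n-2)\bigr)\sum_i T_i = \tfrac{(n+1)(2\verticesV-n)}{6}\sum_i T_i$ using $\sum_{m=1}^{n-1}(n-1-m)m(n-m) = \tfrac{n(n-1)(n-2)(n+1)}{12}$, and the two $\norm{W'}^2$ coefficients do recombine to $-\tfrac{(n+1)(2\verticesV-1)}{12\verticesV^2}$. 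The matrix identity you flag as the hard part is also verifiable directly: with $M1_{\verticesV} = \tfrac{n-1}{2}\bigl(\deg + \tfrac{2}{n-1}\bigr)$, the identity $MQM^T = Q'_{\Omega'} - \tfrac{n^2-1}{6n\verticesV}L(\graphG')$ reduces to the entrywise check $1 + \deg(\vertex'_k)\tfrac{(n-1)(2n-1)}{6n}$ on the diagonal and $\tfrac{n^2-1}{6n}A_{kl}$ off the diagonal, both of which match.

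Two small points deserve explicit attention in a written-up version. First, $X^S$ is defined in the paper via $(\bdy^{T+}\otimes I_d)$ and is therefore centered, whereas your representative $\bar X + \eta^S$ pins the junction vertices instead; you should say explicitly that translation invariance of $\Rog$ licenses this choice. Second, the paper allows loop and multiple edges, and a loop edge $\edge'_i$ has $w'_i = 0$, so its interior columns of $M$ all equal $e_k$ and contribute $n-1$ to $(MM^T)_{kk}$; this is consistent with your identity only under the convention $L = \bdy\bdy^T$ (loops contribute nothing to $L$), which is the convention the paper uses, but it is worth a sentence. Neither point is a gap in the argument.
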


We now compute the expectation of the right-hand side of~\eqref{eq:symmetrization theorem}. \autoref{prop:embeddings of structure graph} says that if $X$ is distributed as a Gaussian topological polymer (for $\graphG$) with edge variance $\sigma^2$, then the translations of the corresponding embeddings $X'$ of $\graphG'$ which place their centers of mass at the origin are distributed as a Gaussian topological polymer (for $\graphG'$) with edge variance $n \sigma^2$. However, radius of gyration is invariant under translation. Thus the distribution of the random variable $\Rog(X',\deg + \frac{2}{n-1} )$ is the same as it would be if $X'$ were distributed as a Gaussian topological polymer (for $\graphG'$)  with edge variance $n \sigma^2$. It follows that from~\autoref{prop:weighted rog and weighted laplacian} that 
\begin{equation}
\begin{aligned}
\left<\Rog \pars[\Big]{ X',\deg + \frac{2}{n-1} }\right> &= \frac{d n \sigma^2}{\omega} \tr \mathcal{L}^+_{\Omega'}(\graphG') = \frac{d n (n-1) \sigma^2}{2 \verticesV} \tr \mathcal{L}^+_{\Omega'}(\graphG'),
\end{aligned}
\label{eq:expectation of rogGp}
\end{equation}
since
\begin{equation}
\omega = \sum_{i=1}^{\verticesV'} \omega_i =  \sum_{i=1}^{\verticesV'}\left( \deg(\vertex'_i) + \frac{2}{n-1}\right) = 
2 \edgesE' + \frac{2 \verticesV'}{n-1} = \frac{2 \pars{(n-1) \edgesE' + \verticesV'}}{n-1} = \frac{2\verticesV}{n-1}.  \label{eq:total weight}
\end{equation}
Now we know from~\autoref{defn:phantom network distribution} that $\Vec W \dist \mathcal{N}(0,\sigma^2 P)$. Thus,
\[
\left<\norm{W}^2\right> = \tr(\sigma^2 P) = \sigma^2 \dim (\im P) = \sigma^2 d(\edgesE - \operatorname{Loops}(\graphG)) = d \sigma^2 (\edgesE - \operatorname{Loops}(\graphG')),
\]
recalling that $P$ is an orthogonal projector (so its trace is equal to the dimension of its image) and using the fact that $\im P = \im \bdy^T \otimes I_d$ along with~\autoref{prop:ker and image described}.\footnote{We note that this line is essentially Foster's proof of his celebrated theorem~\cite{Foster1948} on the average of pairwise resistance distances.}

Again using~\autoref{prop:embeddings of structure graph}, we know $\Vec W' \dist \mathcal{N}(0,n \sigma^2 P')$, so as above, we have
\[
\left<\norm{W'}^2\right> = d \sigma^2 n (\edgesE' - \operatorname{Loops}(\graphG')).
\]
Using $\chi = \operatorname{Loops}(\graphG) = \operatorname{Loops}(\graphG')$ for brevity and observing that $n \edgesE' = \edgesE = \verticesV + \chi - 1$, we now have
\begin{equation}
\begin{aligned}
\left<
	\pars{2 \verticesV - n} \norm{W}^2 
	-
	\pars{2 \verticesV - 1} \norm{W'}^2\right> &= 
(2 \verticesV - n)(d \sigma^2)(\edgesE - \chi) - (2 \verticesV - 1)(d \sigma^2 n)(\edgesE' - \chi) \\
&= d \sigma^2 (n-1) (2 \verticesV \chi - n \edgesE') \\
&= d \sigma^2 (n-1) \pars[Big]{ (2 \verticesV - 1) \chi - (\verticesV - 1) }.
\end{aligned}
\label{eq:expectation of other stuff}
\end{equation}
Combining~\eqref{eq:iterated expectations} with~\eqref{eq:symmetrization theorem},~\eqref{eq:expectation of rogGp}, and~\eqref{eq:expectation of other stuff} yields~\eqref{eq:expected rog}. To get~\eqref{eq:contraction factor}, we observe that the expected radius of gyration of a $\verticesV$ vertex linear polymer whose embeddings are distributed in $\R^d$ as a Gaussian topological polymer with edge variance $\sigma^2$ is known to be $d \sigma^2 \frac{\verticesV^2-1}{6 \verticesV}$ (cf.~\cite[eq.~(8)]{Cantarella2022ROG}).
\end{proof}

As we noted above, we can rewrite~\autoref{thm:main} in terms of the Kirchhoff index and a weighted Kirchhoff index, obtaining~\autoref{cor:main cor}.

\section{Example computations}

This formula can make computing the exact contraction factor of a subdivided graph rather simple, especially with the aid of the following helpful Lemma. 

\begin{lemma} 
\label{lem:computations}
If $\graphG'$ is a graph with $\verticesV'$ vertices, suppose $\Omega'$ are weights given by $\omega'_i = \deg v'_i + \frac{2}{n-1}$ and $D = \diag(\omega_1, \dotsc, \omega_{\verticesV'})$ is the diagonal matrix of weights. Then if $L(\graphG')$ is the graph Laplacian, define $p(\lambda) = \det(L(\graphG') - \lambda D) =   c_{\verticesV'} \lambda^{\verticesV'} + \cdots + c_1 \lambda + c_0$. Then $\tr \mathcal{L}^+_{\Omega'}(\graphG') = -\frac{c_2}{c_1}$.
\end{lemma}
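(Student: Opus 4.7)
The plan is to identify $p(\lambda)$ with (a nonzero scalar multiple of) the characteristic polynomial of $\mathcal{L}_{\Omega'}(\graphG')$ and then read off $\tr \mathcal{L}^+_{\Omega'}(\graphG')$ from the coefficients of that polynomial via Vieta's formulas after dividing out the trivial root at $\lambda = 0$.

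First, since $D$ is positive definite, I would factor
\[
p(\lambda) = \det(L - \lambda D) = \det(D^{1/2})\det\pars{D^{-1/2} L D^{-1/2} - \lambda I}\det(D^{1/2}) = \det(D)\det\pars{\mathcal{L}_{\Omega'} - \lambda I},
\]
so the roots of $p$ (counted with multiplicity) are exactly the eigenvalues of $\mathcal{L}_{\Omega'}(\graphG')$. Since $\graphG'$ is connected (as is implicit in the definition of the weighted Kirchhoff index), \autoref{prop:bdy facts} gives $\ker L = \Span 1_{\verticesV'}$, so $\ker \mathcal{L}_{\Omega'} = D^{1/2}\Span 1_{\verticesV'}$ is one-dimensional, $0$ is a \emph{simple} root of $p$, and the remaining eigenvalues $\mu_1, \dotsc, \mu_{\verticesV'-1}$ of $\mathcal{L}_{\Omega'}$ are strictly positive (since $\mathcal{L}_{\Omega'}$ is positive semidefinite).

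Consequently $c_0 = 0$ and we may write $p(\lambda) = \lambda\, q(\lambda)$, where
\[
q(\lambda) = c_{\verticesV'}\lambda^{\verticesV'-1} + \cdots + c_2 \lambda + c_1 = (-1)^{\verticesV'}\det(D)\prod_{j=1}^{\verticesV'-1}(\lambda - \mu_j),
\]
with $c_1 = q(0) \neq 0$ and $c_2 = q'(0)$. Applying Vieta's formulas to the product form of $q$ and simplifying signs gives $c_1 = -\det(D)\prod_j \mu_j$ and $c_2 = \det(D)\sum_k \prod_{j \neq k}\mu_j$, so the common factors cancel and
\[
-\frac{c_2}{c_1} = \frac{\sum_k \prod_{j \neq k}\mu_j}{\prod_j \mu_j} = \sum_{k=1}^{\verticesV'-1}\frac{1}{\mu_k}.
\]

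This last sum equals $\tr \mathcal{L}^+_{\Omega'}(\graphG')$: the Moore--Penrose pseudoinverse inverts $\mathcal{L}_{\Omega'}$ on the orthogonal complement of its kernel and is zero on the kernel, so its nonzero eigenvalues are precisely $1/\mu_k$. There is no substantive obstacle; the only care needed is sign bookkeeping in the Vieta step, which is routine since the relevant spectrum is real and nonnegative.
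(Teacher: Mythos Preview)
Your proof is correct and follows essentially the same approach as the paper's: both factor $\det(L-\lambda D)=\det(D)\det(\mathcal{L}_{\Omega'}-\lambda I)$ to identify $p$ as a nonzero scalar multiple of the characteristic polynomial of $\mathcal{L}_{\Omega'}$, observe that connectedness of $\graphG'$ forces nullity one, and then read off $\tr\mathcal{L}_{\Omega'}^+$ from the ratio $-c_2/c_1$. The paper simply cites the fact that $\tr A^+=-c_2/c_1$ for any nullity-one matrix, whereas you spell out the Vieta computation explicitly; otherwise the arguments coincide.
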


\begin{proof}
We know for any matrix $A$ with nullity $1$ that $\tr A^+ = -\frac{c_2}{c_1}$ where these are coefficients of the characteristic polynomial. Since $\graphG'$ is a connected graph, the graph Laplacian $L$ has nullity $1$. Since $\mathcal{L}_{\Omega'} = D^{-1/2} L D^{-1/2}$ is similar to $L$, it has the same nullity. Further, we can write its characteristic polynomial as 
\[
q(\lambda) = \det(D^{-1/2} L D^{-1/2} - \lambda I) = \det(D^{-1/2} (L - \lambda D) D^{-1/2}) = \det(D^{-1}) \det(L - \lambda D).
\]
We see that $q(\lambda) = \det(D^{-1}) p(\lambda)$, so the quotient of coefficients in $q(\lambda)$ agrees with the corresponding quotient in $p(\lambda)$.
\end{proof}

\begin{figure}[t]
\hfill
\begin{overpic}[height=0.25\textwidth]{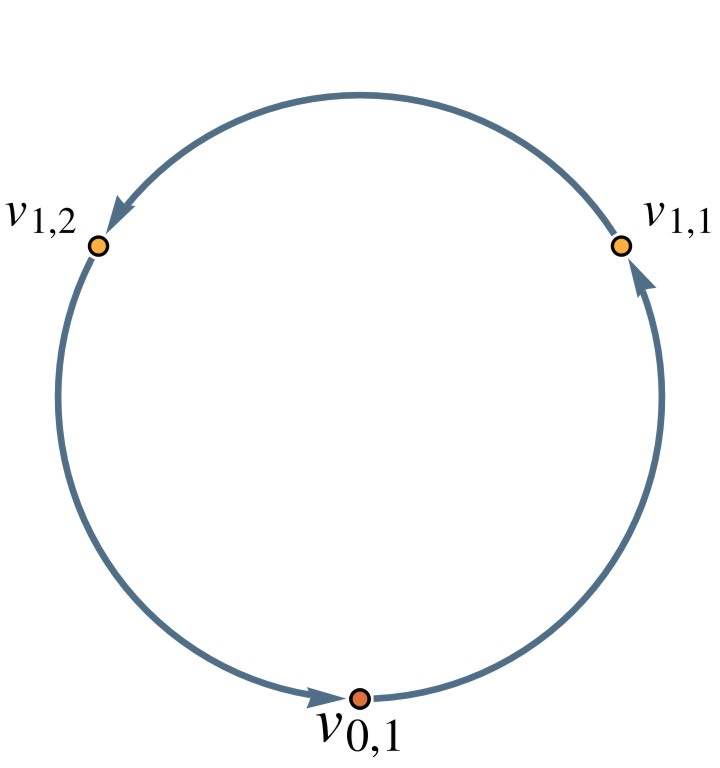} 
\end{overpic}
\hfill
\begin{overpic}[height=0.25\textwidth]{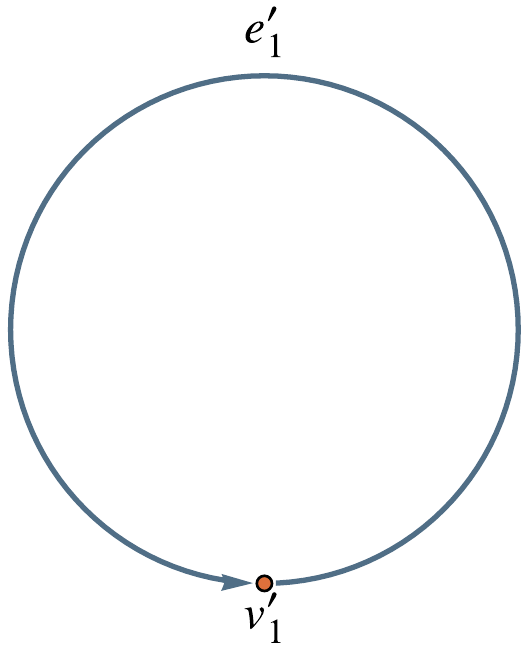} 
\end{overpic}
\hfill
\hphantom{.}
\caption{The cycle graph $\graphG$ (at left) is a subdivision of the multigraph $\graphG'$ with one vertex and one edge (at right).}
\label{fig:cycle}
\end{figure}

We first consider the cycle graph, as illustrated in \autoref{fig:cycle}. In this case, the structure graph $\graphG'$ has one vertex and one edge. The graph Laplacian $L = [0]$, so the weighted graph Laplacian $\mathcal{L}_\Omega' = D^{-1/2} L D^{-1/2} = [0]$ as well. It follows that $\tr \mathcal{L}^+_{\Omega'} = 0$. The cycle graph has one loop and $n$ vertices, so~\eqref{eq:contraction factor} becomes
\[
g(\graphG) = \frac{3(n^2-1)}{n^2-1} \left(\frac{1}{3}\left(1 - \frac{1}{2n}\right) - \frac{1}{6}\left( 1 - \frac{1}{n} \right)\right) = \frac{1}{2},
\]
as expected. 

\begin{figure}[t]
\hfill
\begin{overpic}[width=0.3\textwidth]{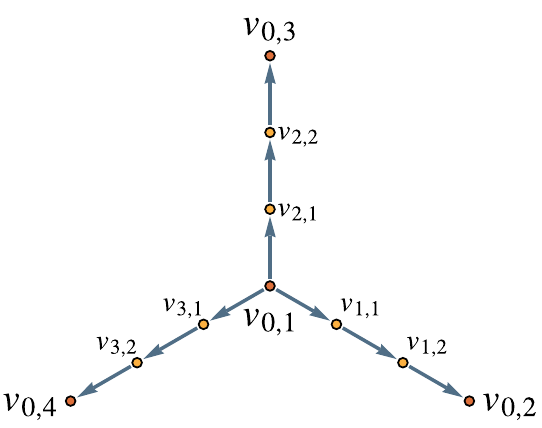} 
\end{overpic}
\hfill
\begin{overpic}[width=0.3\textwidth]{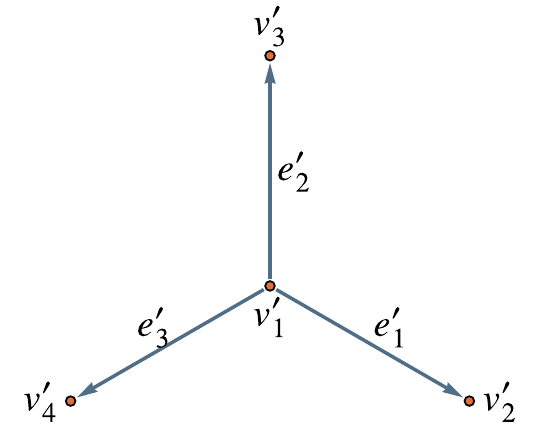} 
\end{overpic}
\hfill
\hphantom{.}
\caption{The star polymer graph $\graphG$ (at left) is a subdivision of the star graph with four vertices $\graphG'$ (at right).}
\label{fig:ygraph}
\end{figure}

The contraction factor of a star polymer with $f$ branches (in the limit as the number of subdivisions $n \rightarrow \infty$) is given by $(3f - 2)/f^2$~\cite{Teraoka2002}. Consider the case $f=3$ (Y-shaped star polymer), as shown in \autoref{fig:ygraph}, where we expect an asymptotic limit of $7/9$.  In this case, the structure graph $\graphG'$ is the 4-vertex star graph with a central vertex of degree 3 and 3 vertices of degree 1, so we have
\[
\begin{aligned}
\det(L - \lambda D) &= 
\begin{vsmallmatrix}
 3-\lambda  \left(3 + \frac{2}{n-1}\right) & -1 & -1 & -1 \\
 -1 & 1-\lambda  \left(1+ \frac{2}{n-1}\right) & 0 & 0 \\
 -1 & 0 & 1-\lambda  \left(1+\frac{2}{n-1}\right) & 0 \\
 -1 & 0 & 0 & 1-\lambda  \left(1+\frac{2}{n-1}\right) 
\end{vsmallmatrix} \\
&= -\frac{2  (3
   n+1)}{n-1} \lambda +  \frac{3 \left(5 n^2+6
   n+1\right)}{(n-1)^2} \lambda ^2 + c_3 \lambda^3 + c_4 \lambda^4.
\end{aligned}
\]
where the coefficients $c_3$ and $c_4$ are irrelevant to our computation, so we don't write them out. It now follows from~\autoref{lem:computations} that 
\[
\tr \mathcal{L}^+_{\Omega'}(\graphG') = \frac{15 n^2+18 n+3}{6 n^2-4 n-2}.
\]
There are no loops in the star polymer, and the number of vertices $\verticesV = 3n+1$. This means that the contraction factor of the subdivided graph, according to \eqref{eq:contraction factor}, becomes
\[
\begin{aligned}
g(\graphG) &= \frac{7n^2 + 9n + 2}{9n^2 + 9n + 2} = \frac{7}{9}+\frac{2}{9 n}-\frac{14}{81 n^2}+\frac{10}{81
   n^3}+O\left(\frac{1}{n^4}\right).
\end{aligned}
\]
This matches an exact formula for the Y-shaped star polymer with $n$ subdivisions given by Bonchev et al.~\cite[eqs.~(37) and~(39)]{Bonchev:2002vy}. 

\begin{figure}[t]
\hfill
\begin{overpic}[width=0.3\textwidth]{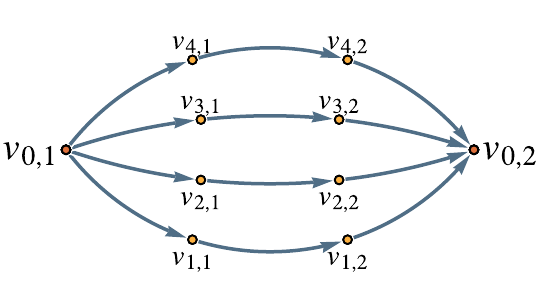} 
\end{overpic}
\hfill
\begin{overpic}[width=0.3\textwidth]{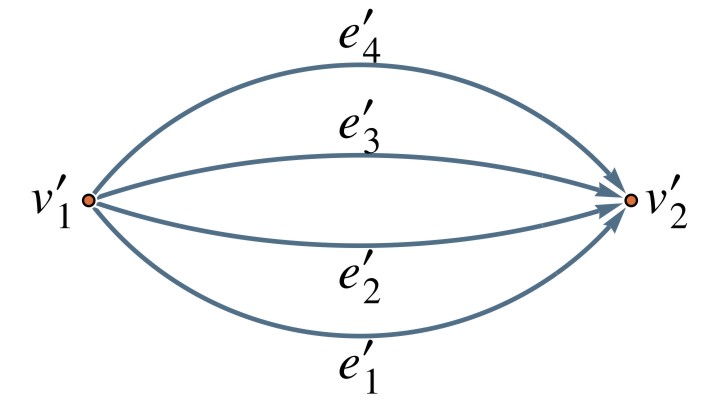} 
\end{overpic}
\hfill
\hphantom{.}
\caption{The graph $\graphG$ (at left) is a subdivision of the multitheta graph with 4 edges $\graphG'$ (at right).}
\label{fig:theta}
\end{figure}

We next consider the subdivided multitheta graph, as shown in \autoref{fig:theta}. Here $\graphG'$ is the graph with $m$ edges joining two vertices, so $\verticesV' = 2$, and $\edgesE' = m$. Both vertices have degree $m$, so we compute
\[
\det(L - \lambda D) = 
\begin{vsmallmatrix} m-\lambda  \left(m+\frac{2}{n-1}\right) & -m \\
 -m & m-\lambda  \left(m+\frac{2}{n-1}\right) \\
\end{vsmallmatrix} = \frac{(m(n-1)+2)^2}{(n-1)^2}\lambda ^2  - \frac{2m(m(n-1)+2)}{n-1}\lambda.
\]
It follows from~\autoref{lem:computations} that $\tr \mathcal{L}^+_{\Omega'}(\graphG') = \frac{1}{m (n-1)}+\frac{1}{2}$, and then from~\autoref{thm:main} that the contraction factor for the $n$-subdivided $m$-multitheta graph is given by 
\[
\begin{aligned}
g(\theta_{m,n}) &= \frac{m^3 n^3 -((m-2) m (m+2)) n^2 + ((m-2) (m-1) m) n -m^3+3 m^2-6 m+6} 
{m^4 n^3 -(3 (m-2) m^3) n^2 + m^2 (3 m^2-12 m+11) n - (m-3) (m-2) (m-1) m} \\
   &= \frac{1}{m}+\frac{2 (m-1)(m-2)}{m^3
   n}+O\left(\frac{1}{n^2}\right).
\end{aligned}
\]
We see that the large-$n$ limit of the contraction factor is $1/m$, in agreement with~\cite[eq.~3]{Uehara:2018bb}. The correction for finite $n$ is quite small: for the $\theta_{3,4}$ graph, the contraction factor may be computed explicitly as $\frac{241}{660} \sim 0.365$ which is very close to $\frac{1}{3}$. 

\begin{figure}[t]
\hfill
\begin{overpic}[width=0.4\textwidth]{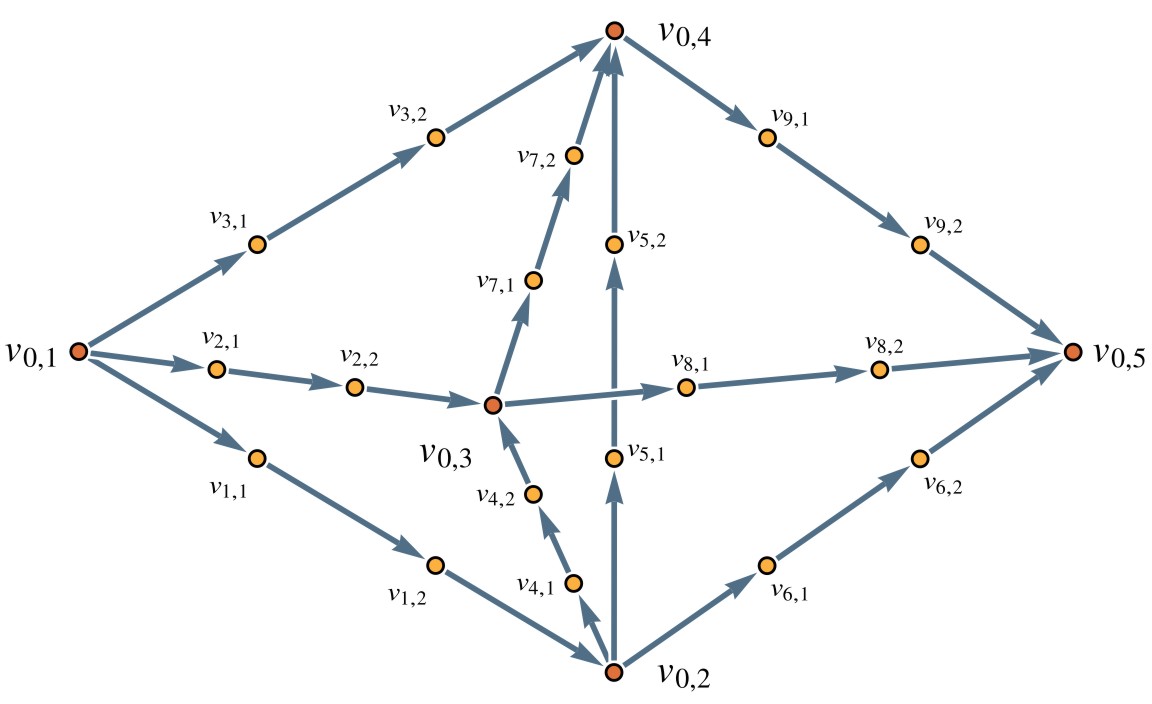} 
\end{overpic}
\hfill
\begin{overpic}[width=0.4\textwidth]{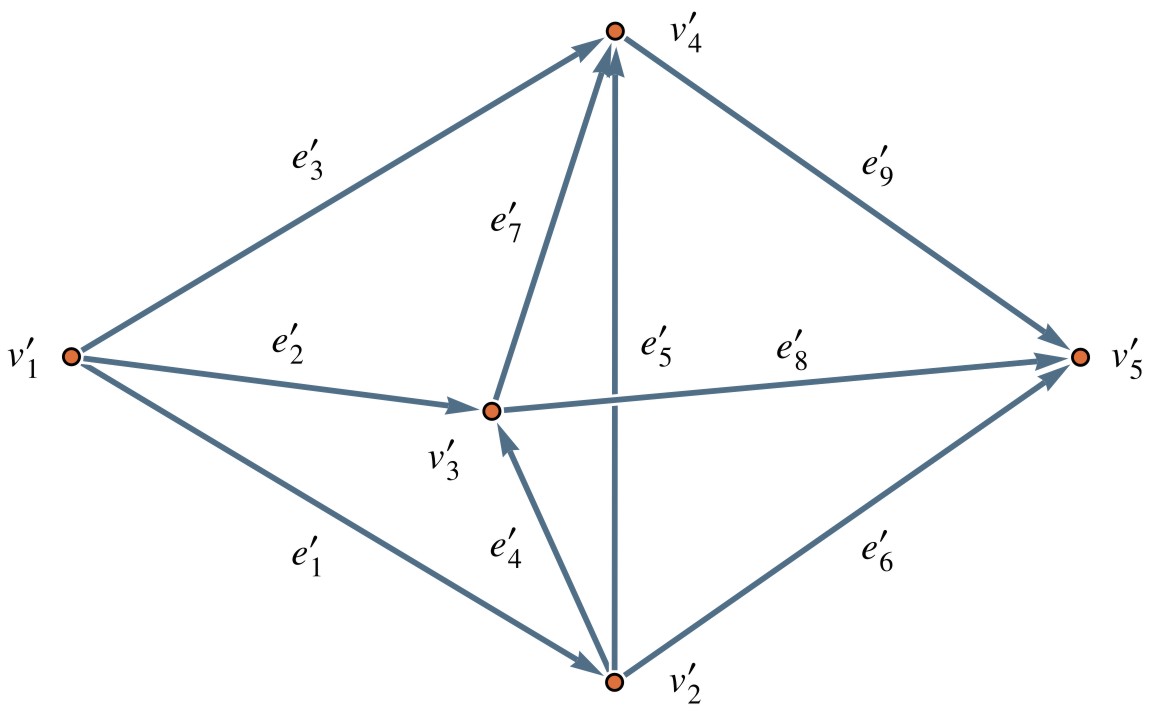} 
\end{overpic}
\hfill
\hphantom{.}
\caption{The graph $\graphG$ (at left) is a subdivision of the bipyramid graph $\graphG'$ (at right).}
\label{fig:example}
\end{figure}

Finally, we give an example of applying this method to a new graph that we have not found in the literature. Consider the 5 vertex bipyramid graph $\graphG'$ in~\autoref{fig:example}, where three of the vertices have degree 4 and the other two have degree 3, and an arbitrary subdivision $\graphG$ where each edge of $\graphG'$ has been subdivided into $n$ edges. We have
\[
\det(L - \lambda D) =  
\begin{vsmallmatrix}
 4- \lambda  \left(4 + \frac{2}{n-1}\right) & -1 & -1 & -1  & -1 \\
 -1 & 4- \lambda  \left(4 + \frac{2}{n-1}\right) & -1 & -1 & -1 \\
 -1 & -1 & 4-  \lambda  \left(4 + \frac{2}{n-1}\right) & -1 & -1 \\
 -1 & -1 & -1 & 3- \lambda  \left(3 + \frac{2}{n-1}\right)  & 0 \\
 -1 & -1 & -1 & 0 & 3- \lambda \left(3+\frac{2}{n-1}\right) \\
\end{vsmallmatrix}.
\]
Evaluating the determinant yields
\[
p(\lambda) = c_5 \lambda^5 + \cdots +    
   \frac{10  \left(441 n^2-384
   n+83\right)}{(n-1)^2} \lambda ^2
   +
   \frac{150 (4-9
   n)}{n-1} \lambda  
\]
and this implies that $\tr \mathcal{L}_{\Omega'}(\graphG') = (49/15) + 28/(15 (-1 + n)) + 1/(15 (-4 + 9 n))$.
Since $\graphG'$ has $9$ edges, we know $\operatorname{Loops}(\graphG') = \edgesE' - \verticesV' + 1 = 9 - 5 + 1 = 5$. Further, for the subdivided graph, $\verticesV = \verticesV' + (n-1) \edgesE' = 5 + 9(n-1)$. Plugging this all into~\eqref{eq:contraction factor} yields an exact formula for $g(\graphG)$:
\[
g(\graphG) = \frac{1287 n^3-1043 n^2-239 n+275}{30 (9 n-4) \left(27
   n^2-24 n+5\right)} = \frac{143}{810}+\frac{673}{7290
   n}+O\left(\frac{1}{n^2} \right).
\]

\section*{Acknowledgments}
We are very grateful for the support of the National Science Foundation (DMS--2107700 to Shonkwiler) and the Simons Foundation (\#524120 to Cantarella), as well as many helpful discussions with colleagues about radius of gyration and Gaussian topological polymers, including Henrik Schumacher, Sebastian Caillaut and Azim Sharapov. This work was partially supported by the Japan Science and Technology Agency CREST: Grant Number JPMJCR19T4. 

\bibliography{papers-special,mendeley-export}

\end{document}